\documentclass{IEEEtran}
\usepackage{amsmath,amssymb,amsfonts}
\usepackage{amsthm}
\usepackage{algorithm}
\usepackage{algorithmic}
\usepackage{graphicx}
\usepackage{textcomp}

\usepackage{caption}

\usepackage{graphicx}
\usepackage{amsmath}
\usepackage{amsfonts}
\usepackage{amssymb}
\usepackage{cases}
\usepackage{amsthm}

\usepackage[colorlinks=true,
            linkcolor=blue,
            citecolor=blue,
            urlcolor=blue]{hyperref}

\newcommand{\lf}{\left}
\newcommand{\rg}{\right}

\theoremstyle{plain}
\newtheorem{theorem}{Theorem}
\newtheorem{lemma}{Lemma}
\newtheorem{corollary}{Corollary}

\theoremstyle{definition}
\newtheorem{definition}{Definition}
\newtheorem{assumption}{Assumption}
\newtheorem{example}{Counter-Example}

\theoremstyle{remark}
\newtheorem{remark}{Remark}

\renewcommand{\phi}{\varphi}
\renewcommand{\epsilon}{\varepsilon}

\newcommand{\R}{\mathbb{R}}
\newcommand{\extended}{\bar{\R}}
\newcommand{\N}{\mathbb{N}}
\newcommand{\Rn}{\R^{n}}
\newcommand{\Ru}{\R^{p_1}}
\newcommand{\Rd}{\R^{p_2}}

\newcommand{\sbs}{\subseteq}
\newcommand{\ssbs}{\subset}

\newcommand{\tms}{\times}
\newcommand{\ii}{\infty}

\newcommand{\dynamics}{f}

\newcommand{\tVal}{t}
\newcommand{\tNot}{\tVal_0}
\newcommand{\tDum}{s}
\newcommand{\tDumDum}{\tau}
\newcommand{\tInit}{S}
\newcommand{\tFin}{T}

\newcommand{\xVal}{x}
\newcommand{\xDum}{y}

\newcommand{\xNot}{\xVal_0}
\newcommand{\uVal}{u}
\newcommand{\dVal}{d}

\newcommand{\uVals}{\mathcal{U}}
\newcommand{\dVals}{\mathcal{D}}

\newcommand{\uSig}{\mathrm{u}}
\newcommand{\dSig}{\mathrm{d}}

\newcommand{\uSigs}{\mathrm{U}}
\newcommand{\dSigs}{\mathrm{D}}

\newcommand{\dStrat}{\delta}
\newcommand{\dStrats}{\Delta}

\newcommand{\xSig}{\mathrm{x}}
\newcommand{\xSigs}{\mathrm{X}}

\newcommand{\traj}[2]{\xSig_{#1}^{#2}}
\newcommand{\naughtTraj}{\traj{x_0,t_0}{\uSig,\dSig}}
\newcommand{\naughtTrajDerivative}{\dot{\xSig}_{x_0,t_0}^{\uSig,\dSig}}

\newcommand{\stratTraj}{\traj{x,t}{\uSig,\dStrat(\uSig)}}

\newcommand{\evalTimes}{\mathcal{T}}
\newcommand{\rob}{J}

\newcommand{\supu}{\sup_{\uSig \in \uSigs}}
\newcommand{\infd}{\inf_{\dStrat \in \dStrats}}
\newcommand{\val}{V}

\newcommand{\robRA}{J_\mathrm{RA}}
\newcommand{\robR}{J_\mathrm{R}}
\newcommand{\robA}{J_\mathrm{A}}
\newcommand{\valRA}{\val_\mathrm{RA}}
\newcommand{\valR}{\val_\mathrm{R}}
\newcommand{\valA}{\val_\mathrm{A}}

\newcommand{\tar}{r}
\newcommand{\tarSpec}{\tilde{r}}
\newcommand{\con}{q}

\newcommand{\tarEval}[2]{\tar \lf( #1(#2), #2 \rg)}
\newcommand{\tarSpecEval}[2]{\tarSpec \lf( #1(#2), #2 \rg)}
\newcommand{\conEval}[2]{\con \lf( #1(#2), #2 \rg)}

\newcommand{\rara}{\mathrm{ODRA}}
\newcommand{\goala}{\mathcal{G}_{1,a}}
\newcommand{\goalb}{\mathcal{G}_{1,b}}
\newcommand{\goalone}{\mathcal{G}_{1}}
\newcommand{\goaltwo}{\mathcal{G}_{2}}
\newcommand{\obstacleone}{\mathcal{O}_1}
\newcommand{\obstacletwo}{\mathcal{O}_2}

\newcommand{\compact}{\mathcal{K}}
\newcommand{\RleT}{\R_{\le \tFin}}
\newcommand{\blackbox}{\hfill\rule{6pt}{6pt}}

\title{Exact Decomposition of Value Functions for Two-Player Games in Hamilton-Jacobi Reachability
\author{Dylan Hirsch*, William Sharpless*, and Sylvia Herbert}
\thanks{Research reported in this publication was supported by the National Institutes of Health under award number T32EB009380. The content is solely the responsibility of the authors and does not necessarily represent the official views of the National Institutes of Health.}
\thanks{Dylan Hirsch (corresponding author), William Sharpless, and Sylvia Herbert are with the Department of Mechanical and Aerospace Engineering, University of California, San Diego, 9500 Gilman Drive, La Jolla, CA 92093.
        {\tt\small dhirsch@ucsd.edu,
        wsharpless@ucsd.edu,
        sherbert@ucsd.edu.}}%
}

\begin{document}
\maketitle

\begin{abstract}
    Hamilton-Jacobi reachability (HJR) is an important framework in safe control theory.
    HJR provides theoretical tools and numerical methods to obtain value-functions for tasks involving target-reaching and obstacle-avoidance.
    A recent work proposed an algebraic framework to scale to complex, composite tasks via decomposing the value functions of the composite tasks into value functions for the fundamental tasks traditionally studied in HJR, all in the one-player setting.
    Many of these decomposition results, however, do not directly translate to the two-player setting.
    In this technical note, we nevertheless show that one of the previous work's key results for analyzing various tasks involving reaching multiple targets while obeying constraints still holds in the two-player setting, so long as a critical monotonicity assumption is satisfied.
    In particular, we detail the analogous result and its proof (which structurally differs from the one-player case), we show via a counter-example what can go wrong if this monotonicity assumption is not satisfied, and we show how the analysis of a variety of tasks can be performed using this result.
    We note that, whereas the prior work considered discrete-time, infinite-horizon tasks, which are more standard in reinforcement learning, we here consider continuous-time, finite-horizon tasks, which are more standard in HJR.
\end{abstract}

\section{Introduction}
Hamilton-Jacobi reachability (HJR) is a powerful framework for safety-critical control that is grounded in rigorous theory, complemented by efficient numerical tools \cite{mitchell-2005,lygeros-2011,fisac-chen-2015,HJR-Survey}.
In brief, HJR builds on the theory of differential games to compute value functions for tasks involving goal-reaching and obstacle-avoidance.
While the early works in HJR studied fundamental tasks, including the reach, avoid, and reach-avoid tasks, more recent works have focused on computing value functions for more complex, composite tasks \cite{stl-meets-reachability,sharpless2026dual,sharpless2026bellman,Xiang-CDC-2025,hirsch-gra}.
In particular, building on the work in \cite{sharpless2026dual}, the recent work \cite{sharpless2026bellman} provides an algebraic approach to decomposing value functions of composite tasks into the value functions of HJR's fundamental tasks when in the one-player setting.

One of the fundamental decomposition formulae in this work allows one to break the value-function computation for certain composite tasks with upstream and downstream components into a three-step process: (1) compute the value function for the downstream process by any means, (2) form a new time-varying target function using the downstream value function, and (3) compute a reach-avoid value function with this new target function and the constraint function from the upstream task.

This rule, however, does not generally apply in the two-player setting, i.e. in the presence of an adversarial disturbance.
In this technical note, we provide the proper analogue for the two-player setting.
We note that in this work, we also consider the continuous-time, finite-horizon setting, which is more traditional in HJR, whereas the previous work considered a discrete-time, infinite-horizon setting, which is more traditional in reinforcement learning.

The critical difference between these decomposition results in the one-player and two-player settings is a certain monotonicity assumption on the downstream task that is required in the latter.
We demonstrate via a counter-example why the decomposition fails in the two-player setting without this assumption.

We also show that many of the decomposition results previously derived in the literature can be recovered (and even extended) using this single formula.
For example, decompositions for the two-player reach-always-avoid and reach-reach problems were studied in \cite{hirsch-2206}, and sequential reach-avoid problems were studied in \cite{Xiang-CDC-2025,chen2025controlsynthesismultiplereachavoid}.

~\\
\noindent\textbf{Notation:}
We denote the extended real line by $\extended := [-\ii,\ii]$, having its usual topology.
For each $\tFin \in \R$ we let $\RleT := (-\ii,\tFin]$.

\section{Setup and Background}\label{sec:background}
The setup and background for this article is almost identical to the general HJR setup described in \cite{hirsch-gra}.
We reproduce this information here for completeness.

\subsection{Dynamics}
We consider a system with dynamics
\begin{equation}\label{eqn:dynamics}
\dot{\xSig}(\tVal) = \dynamics(\xSig(\tVal), \uSig(\tVal), \dSig(\tVal), \tVal),
\end{equation}
where $\dynamics:\Rn \tms \uVals \tms \dVals \tms \R \to \Rn$, with $\uVals \ssbs \Ru$ and $\dVals \ssbs \Rd$.
Here, $\xSig$ is the state, $\uSig$ is the control, and $\dSig$ is the disturbance.
We will assume the following throughout the sequel, which guarantee global existence and uniqueness of trajectories for this system.

\begin{assumption}\label{assumption:compactness}
    The sets $\uVals$ and $\dVals$ are compact.
\end{assumption}
\begin{assumption}\label{assumption:regularity}
~
\begin{itemize}
    \item For all $\tVal \in \R$, the map $(\xVal, \uVal, \dVal) \mapsto \dynamics(\xVal, \uVal, \dVal, \tVal)$ is continuous.
    \item For all $\xVal \in \Rn$, $\uVal \in \uVals$, and $\dVal \in \dVals$, the map $\tVal \mapsto \dynamics(\xVal, \uVal, \dVal, \tVal)$ is measurable.
    \item There exists some $K > 0$ such that $\|\dynamics(\xVal,\uVal,\dVal,\tVal)\| \le K(1 + \|\xVal\|)$ for all $\uVal \in \uVals$, $\dVal \in \dVals$, and $\tVal \in \R$.
    \item For each compact set $\compact \ssbs \Rn$ there exists an $L > 0$ such that $\|\dynamics(\xVal_1,\uVal,\dVal,\tVal) - \dynamics(\xVal_2,\uVal,\dVal,\tVal)\| \le L \|\xVal_1 - \xVal_2\|$ for all $\xVal_1,\xVal_2 \in \compact$, $\uVal \in \uVals$, $\dVal \in \dVals$, and $\tVal \in \R$.
\end{itemize}
\end{assumption}

\subsection{Signals and trajectories}
We denote by $\uSigs$ the set of all measurable control signals $\uSig: \R \to \uVals$ and by $\dSigs$ the set of all measurable disturbance signals $\dSig: \R \to \dVals$.
We now precisely define the state trajectory $\naughtTraj: \R \to \Rn$ that results from a control signal $\uSig$ and disturbance signal $\dSig$, given that the system is in state $\xNot$ at time $\tNot$.

For each $\xNot \in \Rn$, $\tNot \in \R$, $\uSig \in \uSigs$, and $\dSig \in \dSigs$,
we let $\naughtTraj$ be the Carath\'{e}odory solution of \eqref{eqn:dynamics} under the condition $\xSig(\tNot) = \xNot$.
More explicitly, $\naughtTraj$ is defined to be the unique locally absolutely continuous map from $\R$ to $\Rn$ for which $\naughtTraj(\tNot) = \xNot$ and $\naughtTrajDerivative(\tVal) = \dynamics( \naughtTraj(\tVal), \uSig(\tVal), \dSig(\tVal), \tVal)$ for a.e. $\tVal \in \R$ (existence and uniqueness of this solution follow from Assumptions \ref{assumption:compactness} and \ref{assumption:regularity}; see Theorem 1.2.1 in \cite{Friedman-Differential-Games}). %\cite{Friedman-Differential-Games}

\subsection{Performance functionals}

We will denote by $\xSigs$ the set of all plausible trajectories, i.e. continuous maps $\xSig: \R \to \Rn$.
We endow $\xSigs$ with the topology of uniform convergence on compact sets.
In this topology, a sequence $(\xSig_{i})_{i \in \N}$ in $\xSigs$ converges to $\xSig \in \xSigs$ iff $\max_{\tVal \in \compact} \| \xSig_i(\tVal) - \xSig(\tVal) \| \to 0$ for all non-empty, compact $\compact \ssbs \R$.

We can think of any map $\rob: \xSigs \tms \evalTimes \to \extended$, where $\evalTimes \sbs \R$, as a performance functional for some task, where $\rob$ assigns a score $\rob(\xSig, \tVal)$ to a trajectory $\xSig$ based on the time $\tVal$ at which the task begins.
Here, $\evalTimes$ represents the set of times at which it is reasonable for the task to begin (e.g. for the task ``remain inside until noon today," we may define $\evalTimes$ to be times no later than noon).

In HJR, the performance functional $\rob$ for each task is chosen such that $\rob(\xSig, \tVal) > 0$ iff the trajectory $\xSig$ satisfies the task starting at time $\tVal$.\footnote{More generally, the robustness metric for any specification in temporal logic is a performance functional whose sign corresponds to qualitative satisfaction of the specification \cite{donze-robustness-metric}.}
Three key examples are the reach, avoid, and reach-avoid performance functionals.
For each $\tFin \in \R$ and continuous $\tar,\con: \Rn \tms \R \to \R$, we define $\robR[\tar; \tFin], \robA[\con;\tFin], \robRA[\tar,\con;\tFin]: \xSigs \tms \RleT \to \R$ by
\begin{align*}
    \robR[\tar;\tFin](\xSig, \tVal) &= \max_{\tDum \in [\tVal, \tFin]} \tarEval{\xSig}{\tDum},\\
    \robA[\con;\tFin](\xSig, \tVal) &= \min_{\tDumDum \in [\tVal, \tFin]} \conEval{\xSig}{\tDumDum},\\
    \robRA[\tar,\con;\tFin](\xSig, \tVal) &= \max_{\tDum \in [\tVal, \tFin]} \min\{ \tarEval{\xSig}{\tDum}, \min_{\tDumDum \in [\tVal, \tDum]} \conEval{\xSig}{\tDumDum} \}. 
\end{align*}

We will typically require the following technical condition of our performance functionals:
\begin{definition}[Definition 4 in \cite{hirsch-gra}]
Let $\evalTimes \sbs \R$.
A performance functional $\rob: \xSigs \tms \evalTimes \to \extended$ is \textbf{past-independent} if for all $\tVal \in \evalTimes$ and all $\xSig_1, \xSig_2 \in \xSigs$, if $\xSig_1(\tDum) = \xSig_2(\tDum)$ for every $\tDum \ge \tVal$, then $\rob(\xSig_1, \tVal) = \rob(\xSig_2, \tVal)$.
\end{definition}
Note that the reach, avoid, and reach-avoid performance functionals discussed above are all continuous and past-independent (Lemma 3 and Observation 2 in \cite{hirsch-gra}).

\subsection{Differential games and value functions}
We now consider a differential game in which the controller player would like to maximize a given performance functional and the disturbance player would like to minimize it.

We first specify the information pattern for the game.
Conceptually, we can think of any map $\dStrat:\uSigs \to \dSigs$ as representing a strategy for a disturbance player that selects the disturbance input $\dSig = \dStrat(\uSig)$ based upon the control input $\uSig$.
In HJR, we make the restriction that the disturbance's strategy cannot use future information to inform current decisions via the following definition.

\begin{definition}[Non-Anticipativity]
The map $\dStrat :\uSigs \to \dSigs$ is \textbf{non-anticipative} if for all $\tVal \in \R$ and $\uSig_1,\uSig_2 \in \uSigs$ such that $\uSig_1(\tDum) = \uSig_2(\tDum)$ for a.e. $\tDum \le \tVal$, we also have $\dStrat[\uSig_1](\tDum) = \dStrat[\uSig_2](\tDum)$ for a.e. $\tDum \le \tVal$.
\end{definition}
\noindent We denote by $\dStrats$ the set of all non-anticipative $\dStrat: \uSigs \to \dSigs$.

Adapting the notation used in \cite{sharpless2026bellman}, given a performance functional $\rob: \xSigs \tms \evalTimes \to \extended$, we define the value function $\val[\rob]: \Rn \tms \evalTimes \to \extended$ by
\begin{equation}\label{eqn:value-function-definition}
    \val[\rob](\xVal, \tVal) = \infd \supu \rob(\stratTraj, \tVal).
\end{equation}

The value function $\val[\rob](\xVal, \tVal)$ conceptually represents the performance score that will result from both players acting optimally under the non-anticipative information structure, given that the system is initialized in state $\xVal$ at time $\tVal$.
We have the following useful fact.
\begin{lemma}[Lemma 1 in \cite{hirsch-gra}] \label{lem:continuous-evaluations-have-continuous-values}
Let $\rob: \xSigs \tms \evalTimes \to \extended$ for some $\evalTimes \sbs \R$.
If $\rob$ is continuous, then $\val[\rob]$ is also continuous.
If we in addition have $\rob < \ii$ (resp. $\rob > -\ii$), then $\val[\rob] < \ii$ (resp. $\val[\rob] > -\ii$). 

\end{lemma}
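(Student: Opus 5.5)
The plan is to reduce everything to a uniform-continuity argument on a compact set of trajectories. First compose $\rob$ with an order-preserving homeomorphism $\arctan:\extended\to[-\pi/2,\pi/2]$ (with $\arctan(\pm\ii)=\pm\pi/2$). This commutes with $\inf$ and $\sup$, so $\arctan\circ\val[\rob]=\val[\arctan\circ\rob]$. It therefore suffices to treat a continuous, real-valued, bounded $\rob$ and to prove continuity of $\val[\rob]$ at an arbitrary $(\xVal,\tVal)\in\Rn\tms\evalTimes$. The continuity must be sequential, with respect to $\Rn\tms\evalTimes$ with the subspace topology.

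\textbf{Step 1: compactness of trajectories.} Fix a compact neighborhood $B$ of $\xVal$ and a compact interval $I\ni\tVal$. By the linear growth bound in Assumption \ref{assumption:regularity} and Gronwall, all trajectories $\traj{y,s}{\uSig,\dSig}$ with $(y,s)\in B\tms I$, $\uSig\in\uSigs$, $\dSig\in\dSigs$ are bounded on every compact time interval. They are also equi-Lipschitz there, since $\|\dynamics\|\le K(1+\|\xVal\|)$. Arzel\`a--Ascoli together with a diagonal argument over $[-k,k]$ then shows that this family $\mathcal{F}$ has compact closure $\bar{\mathcal{F}}$ in $\xSigs$. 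Consequently $\rob$ is uniformly continuous on the compact set $\bar{\mathcal{F}}\tms(\evalTimes\cap I)$, or at least along sequences converging in it. This is the form I would actually use: argue by contradiction with sequences and extract convergent subsequences.

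\textbf{Step 2: uniform closeness of trajectories.} Take $(y,s)\to(\xVal,\tVal)$ and use the \emph{same} pair $(\uSig,\dStrat(\uSig))$ from both initial conditions. The trajectory from $(y,s)$ is at distance $O(|s-\tVal|)$ from $y$ at time $\tVal$, by the growth bound. Applying the local Lipschitz estimate on the compact set containing $\mathcal{F}$ on $[-k,k]$, together with Gronwall, gives
\begin{equation*}
\max_{\tau\in[-k,k]}\big\|\traj{y,s}{\uSig,\dStrat(\uSig)}(\tau)-\traj{\xVal,\tVal}{\uSig,\dStrat(\uSig)}(\tau)\big\|\le C_k\big(\|y-\xVal\|+|s-\tVal|\big),
\end{equation*}
uniformly in $\uSig$ and $\dStrat$. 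Combined with the uniform continuity from Step 1, this yields
\begin{equation*}
\sup_{\uSig,\dStrat}\big|\rob(\traj{y,s}{\uSig,\dStrat(\uSig)},s)-\rob(\traj{\xVal,\tVal}{\uSig,\dStrat(\uSig)},\tVal)\big|\to0 .
\end{equation*}
Because both value functions are $\inf_{\dStrat}\sup_{\uSig}$ over the same index sets, and $|\inf\sup a-\inf\sup b|\le\sup|a-b|$, we get $\val[\rob](y,s)\to\val[\rob](\xVal,\tVal)$.

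\textbf{Step 3: finiteness.} Suppose $\rob<\ii$ and fix $(\xVal,\tVal)$. The set $\bar{\mathcal{F}}_{\xVal,\tVal}$, the closure of the trajectories from $(\xVal,\tVal)$, is compact by Step 1. The map $\rob(\cdot,\tVal)$ is continuous into $\extended$ and finite-above everywhere, so it attains a maximum $M<\ii$ on this set. Hence $\supu\rob\le M$ for every $\dStrat$, and therefore $\val[\rob](\xVal,\tVal)\le M<\ii$. The case $\rob>-\ii$ is symmetric: a minimum $m>-\ii$ bounds every $\sup_{\uSig}$ from below.

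The main obstacle is Step 1. I need the uniform continuity of $\rob$ to hold over all admissible trajectories at once, even though the trajectory set need not be closed, and in the topology of uniform convergence on compacts rather than on a single interval. I would handle this by the sequential contradiction argument: pick $\epsilon$-bad sequences, pass to subsequences converging in $\bar{\mathcal{F}}$ via Arzel\`a--Ascoli, and contradict continuity of $\rob$ at the common limit. This works because Step 2 forces the paired trajectories to share that limit.
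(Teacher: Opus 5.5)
Your proof is correct. The paper gives no proof of this lemma; it imports it as Lemma~1 of \cite{hirsch-gra}. So there is no in-paper argument to compare against, and your Arzel\`a--Ascoli plus Gronwall argument stands as a valid self-contained proof. The $\arctan$ reduction, the uniform-in-signals trajectory estimate, the contradiction argument and the max/min bound in Step~3 all go through.

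Two small repairs:
\begin{itemize}
\item Since the lemma allows an arbitrary $\evalTimes\subseteq\R$, the set $\evalTimes\cap I$ need not be compact. Your assertion that $\rob$ is uniformly continuous on $\bar{\mathcal{F}}\times(\evalTimes\cap I)$ is therefore not justified as stated. The sequential contradiction argument you say you would actually use is the right replacement. It closes because the common limit point is $(\xi,\tVal)$ with $\tVal\in\evalTimes$, where continuity of $\rob$ is available.
\item $I$ should be a neighborhood of $\tVal$, not merely an interval containing it, so that the approximating times $s$ eventually lie in $I$.
\end{itemize}

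It is also worth saying explicitly why Step~2 is so clean. Strategies are maps $\uSigs\to\dSigs$ chosen independently of the initial condition, and all signals are defined on all of $\R$. Hence $\val[\rob](y,s)$ and $\val[\rob](\xVal,\tVal)$ are inf--sups over the same index set. Non-anticipativity plays no role in the continuity part, and no time-shifting of signals or dynamic-programming argument is needed.
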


In HJR, the sign of the value $\val[\rob](\xVal, \tVal)$ is used to determine whether a task can be satisfied starting from the state $\xVal$ at time $\tVal$.
Formally, we have the following result:
\begin{lemma}[Lemma 2 in \cite{hirsch-gra}]\label{lem:value-and-satisfiability}
Let $\evalTimes \sbs \R$, $\rob: \xSigs \tms \evalTimes \to \extended$, $\xVal \in \Rn$, and $\tVal \in \evalTimes$.
If $\val[\rob](\xVal, \tVal) > 0$, then for each $\dStrat \in \dStrats$ there is a $\uSig \in \uSigs$ such that $\rob(\stratTraj, \tVal) > 0$.
Similarly, if $\val[\rob](\xVal, \tVal) < 0$, then there is some $\dStrat \in \dStrats$ such that $\rob(\stratTraj, \tVal) < 0$ for all $\uSig \in \uSigs$.
\end{lemma}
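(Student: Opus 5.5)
The two claims follow almost directly from unwinding the definition \eqref{eqn:value-function-definition} of $\val[\rob]$ as an infimum over $\dStrats$ of a supremum over $\uSigs$. The only tool needed is the characterization of strict inequalities against a supremum or infimum in $\extended$. No continuity or past-independence of $\rob$ is required, so Lemma \ref{lem:continuous-evaluations-have-continuous-values} will not be invoked.

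\emph{First claim.} Suppose $\val[\rob](\xVal,\tVal) > 0$. Fix an arbitrary $\dStrat \in \dStrats$. By the definition of the infimum,
\begin{equation*}
0 < \val[\rob](\xVal,\tVal) \le \supu \rob(\stratTraj,\tVal).
\end{equation*}
Suppose, for contradiction, that $\rob(\stratTraj,\tVal) \le 0$ for every $\uSig \in \uSigs$. Then the supremum would be $\le 0$, which is impossible. Hence some $\uSig \in \uSigs$ satisfies $\rob(\stratTraj,\tVal) > 0$. Since $\dStrat$ was arbitrary, the claim follows.

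\emph{Second claim.} Suppose $\val[\rob](\xVal,\tVal) < 0$. The infimum over $\dStrats$ of the quantities $\supu \rob(\stratTraj,\tVal)$ is then strictly below $0$. Because $0$ is not a lower bound of these quantities, some $\dStrat \in \dStrats$ has $\supu \rob(\stratTraj,\tVal) < 0$. For that $\dStrat$, every $\uSig \in \uSigs$ gives $\rob(\stratTraj,\tVal) \le \supu \rob(\stratTraj,\tVal) < 0$, as required.

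The only delicate point is the extended-real bookkeeping: $\val[\rob]$ may equal $\pm\ii$, and $\dStrats$ and $\uSigs$ are nonempty, since constant signals are measurable and constant maps are non-anticipative. The strict-inequality characterizations of $\sup$ and $\inf$ hold verbatim in $\extended$, including the infinite cases, so both arguments go through unchanged. I expect no substantive obstacle. The key observation is simply that a strict inequality against $0$ passes from the $\inf\sup$ to the existence of a witness: the inf-side gives an existential quantifier over $\dStrats$, and the sup-side gives a universal bound over $\uSigs$.
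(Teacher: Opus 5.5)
Your argument is correct. It is the direct unwinding of the $\inf$--$\sup$ definition \eqref{eqn:value-function-definition} that this lemma rests on; the paper gives no proof here and simply imports the result as Lemma 2 of \cite{hirsch-gra}. Your nonemptiness remark is unnecessary, since both strict-inequality characterizations also hold over empty index sets (with $\sup\varnothing=-\infty$ and $\inf\varnothing=+\infty$), and your claim that constant maps lie in $\dStrats$ tacitly assumes $\dVals\neq\varnothing$.
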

For each $\tFin \in \R$ and continuous $\tar,\con: \Rn \tms \R \to \R$, we will for convenience let
\begin{align*}
    \valR[\tar;\tFin] &= \val[\robR[\tar;\tFin]],\\
    \valA[\con;\tFin] &= \val[\robA[\con;\tFin]],\\
    \valRA[\tar,\con;\tFin] &= \val[\robRA[\tar,\con;\tFin]].
\end{align*}
It follows from Lemma \ref{lem:continuous-evaluations-have-continuous-values} that each of the above value functions is continuous.

\section{Decomposition Theorem}

We here present the main contribution of this technical note, the analogue of Lemma 1 in \cite{sharpless2026bellman} to the two-player, continuous-time, finite-horizon setting.
The critical difference compared with the previous result is the monotonicity assumption \eqref{eqn:montonicity-condition}, which is uniquely needed in the two-player setting.

Before stating the theorem, let us briefly and informally foreshadow the setup and the monotonicity assumption.
We will consider the performance functional corresponding to a task specification in which the system must first reach a goal, while obeying constraints in the process, and then subsequently accomplish a downstream task.
Letting the function $\con$ encode the constraint and the performance functional $\rob_0$ encode the downstream task, we require that
\begin{align*}
    & \min\{\min_{\tDumDum \in [\tVal, \tDum_1]}\conEval{\xSig}{\tDumDum}, \rob_0(\xSig, \tDum_1)\} \\
    &\ge 
    \min\{\min_{\tDumDum \in [\tVal, \tDum_2]}\conEval{\xSig}{\tDumDum}, \rob_0(\xSig, \tDum_2)\}
\end{align*}
whenever $\tDum_1 \le \tDum_2 \le \tFin$ (here $\tFin$ is the horizon time of the task).
Conceptually, this requirement states that the task ``obey the constraints until time $\tDum$ and subsequently perform the downstream task'' should not grow easier as the intermediate time $\tDum$ grows later.

This requirement holds, for example, when the downstream task specified by $\rob_0$ is to reach some collection of goals before time $\tFin$ (see Corollaries \ref{cor:ordered-reach-avoid} and \ref{cor:unordered-reach-avoid}).
Indeed, as the intermediate time $\tDum$ becomes larger, the task ``obey the constraints until time $\tDum$ and subsequently reach all the goals'' becomes more challenging (or at least no less challenging) because one has less time to reach the goals.

\begin{theorem}\label{thm:main-theorem}
    Let $\tFin \in \R$ and let $\tar, \con: \Rn \tms \RleT \to \R$ be continuous.
    Let $\rob_0: \xSigs \tms \RleT \to \R$ be continuous and past-independent.
    Define $\rob: \xSigs \tms \RleT \to \R$ by
    \begin{equation}\label{eqn:main-theorem-composite-performance-funcitonal}
        \rob(\xSig, \tVal) = \max_{\tDum \in [\tVal, \tFin]} \min\{\tarEval{\xSig}{\tDum}, \min_{\tDumDum \in [\tVal, \tDum]} \conEval{\xSig}{\tDumDum},  \rob_0(\xSig, \tDum)\}.
    \end{equation}

    Suppose that for each $\xSig \in \xSigs$ and for each $\tVal,\tDum_1,\tDum_2 \in \RleT$ for which $\tVal \le \tDum_1 \le \tDum_2$ we have
    \begin{align}
        & \min\{\min_{\tDumDum \in [\tVal, \tDum_1]}\conEval{\xSig}{\tDumDum}, \rob_0(\xSig, \tDum_1)\}\nonumber \\
        &\ge 
        \min\{\min_{\tDumDum \in [\tVal, \tDum_2]}\conEval{\xSig}{\tDumDum}, \rob_0(\xSig, \tDum_2)\}. \label{eqn:montonicity-condition}
    \end{align}
    Then $\rob$ is continuous and past-independent, and
    \begin{equation}\label{eqn:main-theorem-conclusion}
        \val[\rob] = \valRA[\tarSpec, \con; \tFin],
    \end{equation}
    where $\tarSpec = \min\{\tar, \val[\rob_0]\}$.
\end{theorem}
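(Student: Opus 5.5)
We prove the two inequalities $\val[\rob] \le \valRA[\tarSpec,\con;\tFin]$ and $\val[\rob] \ge \valRA[\tarSpec,\con;\tFin]$ separately, after dispatching the regularity claims. Continuity of $\rob$ follows as for $\robRA$ (Lemma 3 in \cite{hirsch-gra}). The inner expression $\min\{\tar,\min\con,\rob_0\}$ is jointly continuous in $(\xSig,\tVal,\tDum)$ on the set $\tVal\le\tDum\le\tFin$. A max over the compact interval $[\tVal,\tFin]$, whose endpoint varies continuously, then preserves continuity. Past-independence is immediate: every ingredient is evaluated only at times $\ge\tVal$, and $\rob_0(\cdot,\tDum)$ is past-independent for $\tDum\ge\tVal$. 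Lemma \ref{lem:continuous-evaluations-have-continuous-values} then gives continuity and finiteness of $\val[\rob_0]$, so $\tarSpec$ is continuous and real-valued, and $\valRA[\tarSpec,\con;\tFin]$ is well defined.

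\textbf{Inequality $\val[\rob]\le\valRA$ (disturbance side).} This direction holds without the monotonicity condition. Fix $\epsilon>0$ and a disturbance strategy $\dStrat$ that is $\epsilon$-optimal for the RA game. For each switching time $\tDum$ and state $\xDum$, choose $\epsilon$-optimal non-anticipative strategies $\dStrat_{\xDum,\tDum}$ for the game $\val[\rob_0](\xDum,\tDum)$. We want to splice these into a single non-anticipative $\dStrat'$ that follows $\dStrat$ and, at each time, switches to the downstream strategy from the current state. Switching "at every $\tDum$" is not well defined, so we discretize. Pick a fine time grid $\tVal=\tDum_0<\dots<\tDum_N=\tFin$ and, using continuity, a finite partition of the compact reachable set. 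Define $\dStrat'$ to use $\dStrat$ on $[\tVal,\tDum_k]$ and then the downstream strategy of the cell containing $\xSig(\tDum_k)$.

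The catch is that a single disturbance must be good simultaneously for every candidate $\tDum$. This is where the monotonicity condition \eqref{eqn:montonicity-condition} enters, and we plan the construction around it. Let $\tDum^*$ be the first time at which the RA game's attained payoff is $\le$ the target threshold. We commit to switching at the first grid time $\tDum_k$ where $\min\{\tarSpec,\min\con\}$ along the trajectory drops near $\valRA+\epsilon$. For earlier $\tDum$, the term $\tar$ or the $\val[\rob_0]$ bound is already small, by the choice of $\dStrat$. For later $\tDum$, monotonicity shows that
\[
\min\{\min_{[\tVal,\tDum]}\con,\rob_0(\cdot,\tDum)\}\le\min\{\min_{[\tVal,\tDum_k]}\con,\rob_0(\cdot,\tDum_k)\}\le \val[\rob_0]+\epsilon .
\]
Hence no later reach time helps the controller. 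Making this precise, and uniform over $\tDum$ between grid points, is the main obstacle. We would handle it with uniform continuity of $\rob_0$ on compact trajectory sets and the equicontinuity of trajectories.

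\textbf{Inequality $\val[\rob]\ge\valRA$ (controller side).} This direction is more standard. Given any $\dStrat\in\dStrats$, let the controller play an $\epsilon$-optimal response in the RA game with target $\tarSpec$. This reaches some time $\tDum$ with $\tar$, $\min\con$ and $\val[\rob_0](\xSig(\tDum),\tDum)$ all $\ge\valRA-\epsilon$. The controller then switches to an $\epsilon$-optimal response against the restricted disturbance strategy from $(\xSig(\tDum),\tDum)$, which is non-anticipative by past-independence. To keep $\tDum$ measurable and non-anticipative, we again discretize the candidate switching times and states, and use continuity of $\val[\rob_0]$. Since $\dStrat$ is non-anticipative and $\rob_0$ is past-independent, the concatenated control achieves $\rob\ge\valRA-O(\epsilon)$. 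Letting $\epsilon\to0$ gives \eqref{eqn:main-theorem-conclusion}.
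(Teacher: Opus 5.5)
Your controller-side inequality $\val[\rob]\ge\valRA[\tarSpec,\con;\tFin]$ is correct and is essentially the paper's argument. The discretization you add there is unnecessary: $\dStrat$ is fixed before the signal is chosen, so the reach time is just a number. Also, the restricted strategy is non-anticipative because $\dStrat$ is; past-independence is what transfers $\rob_0$ to the trajectory restarted at the switch point. You have the two directions reversed when you say the disturbance side holds without \eqref{eqn:montonicity-condition}. Counter-Example~\ref{ex:counter-example} is exactly a failure of $\val[\rob]\le\valRA[\tarSpec,\con;\tFin]$, whereas the $\ge$ direction never uses monotonicity.

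The genuine gap is on the disturbance side, at the step ``for earlier $\tDum$, the term $\tar$ or the $\val[\rob_0]$ bound is already small.'' Your switch is triggered by $\tarSpec=\min\{\tar,\val[\rob_0]\}$. So before the switch you can have $\tar$ and $\con$ large while only $\val[\rob_0](\xSig(\tDum),\tDum)$ is small. That only says the disturbance \emph{could} hold $\rob_0$ down by committing to a downstream strategy at $(\xSig(\tDum),\tDum)$. Your spliced strategy has not committed; it is still playing the RA strategy $\alpha$, which is indifferent to $\rob_0$, so the realized $\rob_0(\xSig,\tDum)$ is uncontrolled. Monotonicity cannot help here: it bounds the payoff at a later reach time by the payoff at an earlier one, so it only covers $\tDum$ after the switch.

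Here is an instance:
$\dot x_1=\uSig$, $\dot x_2=\dSig$, $\uVals=\dVals=[-1,1]$, $\tFin=4$, $\con\equiv 100$, and $\rob_0=\robR[g;\tFin]$ with $g(\xVal,\tDum)=x_2$, so that $\val[\rob_0]=x_2$ and \eqref{eqn:montonicity-condition} holds. Let $\tar(\xVal,\tDum)=\max_p\max\{-10,\,10-40\|(x_1,\tDum)-p\|\}$ over $p\in\{(1,1),(-1.5,1.5)\}$. Then $\valRA[\tarSpec,\con;\tFin](0,0)\approx-0.81$. The strategy $\alpha$ that plays $\dSig=-1$, except $\dSig=+1$ after time $1.5$ whenever $x_1(1.5)\le-1$, is optimal for the RA game. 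Against $\uSig\equiv-1$, $\tarSpec<-1.1$ along the whole trajectory, so your trigger never fires. Yet reaching at $\tDum=1.5$ gives $\rob\ge\min\{10,100,x_2(4)\}=1$.

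The missing idea is to trigger on $\tar$ alone. Write $v=\valRA[\tarSpec,\con;\tFin](\xVal,\tVal)$ and switch at the first $\tDum$ with $\tar\bigl(\traj{\xVal,\tVal}{\uSig,\alpha}(\tDum),\tDum\bigr)\ge v+2\epsilon$.
\begin{itemize}
\item Before that time, $\tar$ itself bounds the payoff.
\item At that time, $\epsilon$-optimality of $\alpha$ gives $\min\{\tar,\min_{[\tVal,\tDum]}\con,\val[\rob_0]\}\le v+\epsilon<\tar$. Hence $\min\{\min_{[\tVal,\tDum]}\con,\val[\rob_0]\}\le v+\epsilon$.
\item After that time, \eqref{eqn:montonicity-condition} together with the $\epsilon$-optimal downstream strategy gives the bound $v+3\epsilon$.
\end{itemize}
This first hitting time is automatically non-anticipative. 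A downstream strategy can be chosen for every $(\xDum,\tDum)$ with no measurability requirement. So the space--time discretization you identified as the main obstacle is not needed at all.
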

\begin{proof}
    See Section \ref{sec:appendix-main-theorem} in the appendix.
\end{proof}

Let us briefly explain the significance of this result.
We consider a composite task composed of an upstream reach-avoid task with performance functional $\robRA[\tar, \con; \tFin]$ and a downstream task with performance functional $\rob_0$.
This downstream task can be relatively general, with the only requirements being that its performance functional is continuous, real-valued, past-independent, and defined on $\xSigs \tms \RleT$.

Given an initial time $\tVal \le \tFin$, we would like to calculate the performance functional of a composite task in which the upstream reach-avoid task must first be completed at any intermediate time $\tDum \in [\tVal, \tFin]$ and the downstream task must be completed thereafter.
More precisely, the performance functional $\rob$ for this composite task is defined by \eqref{eqn:main-theorem-composite-performance-funcitonal}.

This theorem states that, so long as the monotonicity condition \eqref{eqn:montonicity-condition} holds, to compute the value function for this composite task, one can proceed in three steps: (1) the value function $\val[\rob_0]$ for the downstream task is computed by any means, (2) the new target function $\tarSpec := \min\{\tar, \val[\rob_0]\}$ is formed, and (3) the reach-avoid value function $\valRA[\tarSpec, \con; \tFin]$ is computed (which can be done by solving a Hamilton-Jacobi-Isaacs partial differential equation \cite{fisac-chen-2015}).
This result guarantees that this last value function will be precisely $\val[\rob]$, the value function for the composite task.

In the two-player case, the result generally may not hold if the monotonicity condition \eqref{eqn:montonicity-condition} is not satisfied.
An example demonstrating how the Theorem can fail when this condition does not hold is presented in Counter-Example \ref{ex:counter-example} in Section \ref{sec:counter-example}.

First, we show how this theorem can be applied.
In particular we recover several separate decomposition results from the literature.

\section{Application of the decomposition to ordered and unordered multi-reach-avoid games}
We now show how the relatively general decomposition formula in Theorem \ref{thm:main-theorem} can be used to easily derive specific decomposition formulae for various tasks.
In particular we will consider:
\begin{enumerate}
    \item[(1)] ordered multiple reach-avoid tasks, where the constraints can change after a target is reached,
    \item[(2)] unordered target reaching, where the constraints do not switch after a target is reached,
    \item[(3)] a reach-always-avoid problem in which a system must not only remain within constraints until a target is reached (as in the normal reach-avoid task), but also thereafter.
\end{enumerate}
Task (1) was considered in \cite{Xiang-CDC-2025,chen2025controlsynthesismultiplereachavoid} and tasks (2) and (3) were considered in \cite{hirsch-2206}.
We note that for task (2), no constraint was previously considered in the prior work.

For simplicity, we limit ourselves to games with two targets and two obstacles in the ordered case, and two targets in the unordered case.
Generalizations to $N$ targets follow along a similar proof structure that shall be detailed in an upcoming work.

\begin{corollary}[Ordered Double Reach-Avoid]\label{cor:ordered-reach-avoid}
Let $\tFin \in \R$ and let $\tar_1, \tar_2, \con_1, \con_2: \Rn \tms \RleT \to \R$ all be continuous with $\con_1 \le \con_2$.
Consider the ordered double reach-avoid performance functional
$\rob: \xSigs \tms \RleT \to \R$ defined by
\begin{align*}
    \rob(\xSig, \tVal) =& \max_{\tDum_1 \in [\tVal, \tFin]}\max_{\tDum_2 \in [\tDum_1, \tFin]} \\
    &\min\{ \tar_1(\xSig(\tDum_1), \tDum_1), \tar_2(\xSig(\tDum_2), \tDum_2),\\
    &\min_{\tDumDum_1 \in [\tVal, \tDum_1]} \con_1(\xSig(\tDumDum_1), \tDumDum_1), \min_{\tDumDum_2 \in [\tDum_1, \tDum_2]} \con_2(\xSig(\tDumDum_2), \tDumDum_2)  \}.
\end{align*}
Let $\tarSpec = \min\{\tar_1, \valRA[\tar_2, \con_2; \tFin]\}$.
Then
\begin{align*}
    \val[\rob] = \valRA[\tarSpec, \con_1; \tFin].
\end{align*}
\end{corollary}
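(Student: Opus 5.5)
The plan is to apply Theorem \ref{thm:main-theorem} with $\tar = \tar_1$, $\con = \con_1$, and downstream functional $\rob_0 = \robRA[\tar_2, \con_2; \tFin]$. Then $\val[\rob_0] = \valRA[\tar_2,\con_2;\tFin]$, so the new target $\tarSpec$ is exactly the one in the statement. The conclusion \eqref{eqn:main-theorem-conclusion} then reads $\val[\rob] = \valRA[\tarSpec, \con_1; \tFin]$, provided three things hold: the ordered functional $\rob$ coincides with the composite functional \eqref{eqn:main-theorem-composite-performance-funcitonal}, $\rob_0$ meets the hypotheses of the theorem, and the monotonicity condition \eqref{eqn:montonicity-condition} holds.

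\textbf{Step 1: identifying $\rob$ with the composite functional.} I pull the inner $\max_{\tDum_2 \in [\tDum_1,\tFin]}$ inside the min over all the terms that do not depend on $\tDum_2$. This uses $\min\{a, \max_i b_i\} = \max_i \min\{a,b_i\}$. What remains inside is $\max_{\tDum_2\in[\tDum_1,\tFin]} \min\{\tar_2(\xSig(\tDum_2),\tDum_2), \min_{\tDumDum_2\in[\tDum_1,\tDum_2]}\con_2(\xSig(\tDumDum_2),\tDumDum_2)\}$, which is precisely $\robRA[\tar_2,\con_2;\tFin](\xSig,\tDum_1)$. So $\rob$ has the form \eqref{eqn:main-theorem-composite-performance-funcitonal} with $\tDum = \tDum_1$.

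\textbf{Step 2: hypotheses on $\rob_0$.} The reach-avoid functional is real-valued and defined on $\xSigs\tms\RleT$. It is continuous and past-independent by the remark after the definition of past-independence (Lemma 3 and Observation 2 of \cite{hirsch-gra}).

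\textbf{Step 3: monotonicity (the main step).} Fix $\xSig$ and $\tVal\le \tDum_1\le\tDum_2\le\tFin$, and write $A_i$ for the right- and left-hand quantities in \eqref{eqn:montonicity-condition}. I need $A_1 \ge A_2$.

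The term $\min_{[\tVal,\tDum_2]}\con_1$ is at most $\min_{[\tVal,\tDum_1]}\con_1$, so it suffices to show
\[
\min\Bigl\{\min_{\tDumDum\in[\tVal,\tDum_2]}\con_1(\xSig(\tDumDum),\tDumDum),\ \rob_0(\xSig,\tDum_2)\Bigr\} \le \rob_0(\xSig,\tDum_1).
\]
Let $\tDum^*\in[\tDum_2,\tFin]$ attain the max defining $\rob_0(\xSig,\tDum_2)$; it exists by continuity and compactness. Then consider $\tDum^*$ as a candidate in the max defining $\rob_0(\xSig,\tDum_1)$. Its constraint segment is $[\tDum_1,\tDum^*] = [\tDum_1,\tDum_2]\cup[\tDum_2,\tDum^*]$:
\begin{itemize}
\item On $[\tDum_2,\tDum^*]$ the $\con_2$ values are already accounted for in $\rob_0(\xSig,\tDum_2)$.
\item On $[\tDum_1,\tDum_2]\subseteq[\tVal,\tDum_2]$ the hypothesis $\con_1\le\con_2$ gives $\con_2 \ge \con_1 \ge \min_{[\tVal,\tDum_2]}\con_1$.
\end{itemize}
Hence $\rob_0(\xSig,\tDum_1) \ge \min\{\tar_2(\xSig(\tDum^*),\tDum^*), \min_{[\tDum_1,\tDum^*]}\con_2\}$, and this is at least the left-hand side of the displayed inequality, which is what I want.

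This is the only place the assumption $\con_1\le\con_2$ enters, and it is what makes the monotonicity hold. Theorem \ref{thm:main-theorem} then yields the claim.
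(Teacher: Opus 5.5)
Your proof is correct and follows the paper's argument. You apply Theorem~\ref{thm:main-theorem} with $\rob_0 = \robRA[\tar_2,\con_2;\tFin]$ and verify \eqref{eqn:montonicity-condition} in two steps: first bound $\rob_0(\xSig,\tDum_1)$ from below by $\min\{\min_{[\tDum_1,\tDum_2]}\con_2, \rob_0(\xSig,\tDum_2)\}$, then use $\con_1 \le \con_2$ on $[\tDum_1,\tDum_2]$. This is exactly the paper's two-step chain; you also justify the max-interchange and the first bound, which the paper states without comment. One cosmetic slip: you call $A_1, A_2$ the ``right- and left-hand'' quantities, but your argument correctly treats $A_1$ as the $\tDum_1$-side, i.e.\ the left-hand side.
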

\begin{proof}
    Let $\rob_0 = \robRA[\tar_2, \con_2; \tFin]$.
    Then for each $\xSig \in \xSigs$ and $\tVal \le \tFin$, we have
    \begin{align*}
        &\rob(\xVal, \tVal) = \\
        &\max_{\tDum_1 \in [\tVal, \tFin]} \min\{\tar_1(\xSig(\tDum_1), \tDum_1), \min_{\tDumDum_1 \in [\tVal, \tDum_1]} \con_1(\xSig(\tDumDum_1), \tDumDum_1),  \rob_0(\xSig, \tDum_1)\}.
    \end{align*}
    Moreover given any $\tVal, \tDum_1, \tDum_2 \in \RleT$ such that $\tVal \le \tDum_1 \le \tDum_2$, we have
    \begin{align*}
        &\min\{\min_{\tDumDum \in [\tVal, \tDum_1]} \con_1(\xSig(\tDumDum), \tDumDum),  \rob_0(\xSig, \tDum_1)\}\\
        &\ge \min\{\min_{\tDumDum_1 \in [\tVal, \tDum_1]} \con_1(\xSig(\tDumDum_1), \tDumDum_1), \min_{\tDumDum_2 \in [\tDum_1, \tDum_2]} \con_2(\xSig(\tDumDum_2), \tDumDum_2), \rob_0(\xSig, \tDum_2)\}\\
        &\ge \min\{\min_{\tDumDum \in [\tVal, \tDum_2]} \con_1(\xSig(\tDumDum), \tDumDum), \rob_0(\xSig, \tDum_2)\}.
    \end{align*}
    The result then follows from Theorem \ref{thm:main-theorem}.
\end{proof}

\begin{corollary}[Unordered Double Reach-Avoid]\label{cor:unordered-reach-avoid}
Let $\tFin \in \R$ and let $\tar_1, \tar_2, \con: \Rn \tms \RleT \to \R$ be continuous.
Consider the unordered double reach-avoid performance functional
$\rob: \xSigs \tms \RleT \to \R$ defined by
\begin{align*}
    &\rob(\xSig, \tVal) = \max_{\tDum_1, \tDum_2 \in [\tVal, \tFin]} \\
    &\min\{ \tar_1(\xSig(\tDum_1), \tDum_1), \tar_2(\xSig(\tDum_2), \tDum_2), \min_{\tDumDum \in [\tVal, \max\{\tDum_1, \tDum_2\}]} \con(\xSig(\tDumDum), \tDumDum) \}.
\end{align*}
Let 
$$\tarSpec = \max\{\min\{\tar_1, \valRA[\tar_2, \con; \tFin]\},\min\{\tar_2, \valRA[\tar_1, \con; \tFin]\}\}.$$
Then
\begin{align*}
    \val[\rob](\xVal, \tVal) = \valRA[\tarSpec, \con; \tFin].
\end{align*}
\end{corollary}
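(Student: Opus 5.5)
The plan is to reduce Corollary \ref{cor:unordered-reach-avoid} to Theorem \ref{thm:main-theorem} with upstream target $\tar := \max\{\tar_1,\tar_2\}$ (in a suitably split form), upstream constraint $\con$, and downstream functional chosen as the reach-avoid functional of whichever target was \emph{not} reached first. The obstacle is that the downstream task depends on which target came first. To handle this, I will not apply Theorem \ref{thm:main-theorem} once, but use it on each ordering and then combine.

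\textbf{Step 1: split by ordering.} I will rewrite $\rob$ by splitting according to whether $\tDum_1 \le \tDum_2$ or $\tDum_2 \le \tDum_1$. This gives $\rob = \max\{\rob_{12}, \rob_{21}\}$, where $\rob_{12}$ is the ordered double reach-avoid functional of Corollary \ref{cor:ordered-reach-avoid} with $\con_1=\con_2=\con$ and targets $\tar_1$ then $\tar_2$. Indeed, when $\tDum_1\le\tDum_2$, the constraint minimum over $[\tVal,\tDum_2]$ splits as the minimum over $[\tVal,\tDum_1]$ and over $[\tDum_1,\tDum_2]$.

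\textbf{Step 2: one-stage reach-avoid form.} Writing $\rob_0^{(i)} := \robRA[\tar_{3-i},\con;\tFin]$ and exchanging the max, I get
\[
\rob(\xSig,\tVal) = \max_{\tDum\in[\tVal,\tFin]} \max_{i\in\{1,2\}} \min\Bigl\{\tar_i(\xSig(\tDum),\tDum), \min_{\tDumDum\in[\tVal,\tDum]}\con(\xSig(\tDumDum),\tDumDum), \rob_0^{(i)}(\xSig,\tDum)\Bigr\}.
\]
The max over $i$ cannot simply be pulled into a single $\tar$ paired with a single $\rob_0$. So I will instead generalize the proof of Theorem \ref{thm:main-theorem}, assuming its appendix argument works for a target--downstream pair indexed by a finite set.

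\textbf{Step 3: rerun the theorem's argument with a finite index set.} My expectation is that the appendix proof goes as follows. The inequality $\val[\rob]\le\valRA[\tarSpec,\con;\tFin]$ follows by having the disturbance play a near-optimal strategy for $\val[\rob_0]$ from the (first) time at which the upstream event is achieved. The inequality $\ge$ follows by the controller switching to a near-optimal downstream control, with monotonicity \eqref{eqn:montonicity-condition} ensuring that stopping at the earliest good time loses nothing.

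With index $i$, the disturbance's downstream response can depend on $i$, since both $\tar_i$ and the trajectory are observed nonanticipatively. The quantity $\max_i \min\{\tar_i,\val[\rob_0^{(i)}]\}$ then replaces $\min\{\tar,\val[\rob_0]\}$, which is exactly $\tarSpec$.

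Monotonicity for each $i$ holds exactly as in the proof of Corollary \ref{cor:ordered-reach-avoid}, with $\con_1=\con_2=\con$. Specifically, $\min\{\min_{[\tVal,\tDum]}\con, \robRA[\tar_{3-i},\con;\tFin](\xSig,\tDum)\}$ is nonincreasing in $\tDum$, because later start times shrink the reach window and lengthen the constraint window. Continuity and past-independence of $\rob$ follow from those of $\robRA$ (Lemma 3 of \cite{hirsch-gra}). The identity $\val[\rob_0^{(i)}] = \valRA[\tar_{3-i},\con;\tFin]$ holds by definition.

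\textbf{Main obstacle.} The hard step is the $\ge$ direction with two indices. When the disturbance must commit after the first time some index is good, monotonicity must be applied to the max over $i$ of the composite quantities. Monotonicity of each term implies monotonicity of the max, since a max of nonincreasing functions is nonincreasing. That observation is what I will use to make the single-index argument go through verbatim.
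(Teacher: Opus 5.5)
There is a genuine gap at the step you call the main obstacle: the finite-index rerun of Theorem~\ref{thm:main-theorem} does not go through with the observation you plan to use. You have also misplaced where monotonicity enters. In the paper's argument it is used in the $\le$ direction, where the disturbance switches strategies; the $\ge$ direction needs none. In the $\le$ direction the disturbance switches at the first time $\tDum^*$ at which some target is good, say $\tar_1$, and from then on it counters one downstream functional. Suppose you bound all later contributions by $\max_i m_i(\tDum^*)$, where $m_i(\tDum)=\min\{\min_{[\tVal,\tDum]}\con,\rob_0^{(i)}(\xSig,\tDum)\}$. That bound does not give $V+O(\epsilon)$. The term $m_2(\tDum^*)$ contains $\robRA[\tar_1,\con;\tFin](\xSig,\tDum^*)\ge\min\{\tar_1,\con\}(\xSig(\tDum^*),\tDum^*)$, which is large precisely because target 1 was just reached. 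Having the disturbance counter $\max_i\rob_0^{(i)}$ instead does not help either. The switching condition only controls $\max_i\val[\rob_0^{(i)}]$, and in a two-player game this can be strictly below $\val[\max_i\rob_0^{(i)}]$. In fact, a finite-index version of Theorem~\ref{thm:main-theorem} assuming only per-index monotonicity is false. A counterexample takes $\tar_1=\tar_2$ and two terminal downstream goals, between which the controller chooses after the disturbance has committed. So ``a max of nonincreasing functions is nonincreasing'' cannot carry the argument.

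What is missing is a cross-index fact specific to the unordered task. For $\tDum\ge\tDum^*$, you must bound the index-2 term by keeping $\tar_2$ rather than discarding it:
\[
\min\Bigl\{\tar_2(\xSig(\tDum),\tDum),\min_{\tDumDum\in[\tVal,\tDum]}\con(\xSig(\tDumDum),\tDumDum)\Bigr\}
\le\min\Bigl\{\min_{\tDumDum\in[\tVal,\tDum^*]}\con(\xSig(\tDumDum),\tDumDum),\robRA[\tar_2,\con;\tFin](\xSig,\tDum^*)\Bigr\}=m_1(\tDum^*).
\]
The right-hand side is exactly the quantity the disturbance counters after switching on index 1, so with this inequality your route can be repaired.

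Your premise that the max over $i$ cannot be pulled into a single target/downstream pair is also mistaken; the paper does exactly that. It takes $\tar=\max\{\tar_1,\tar_2\}$ and $\rob_0=\min\{\rob_1,\rob_2\}$ with $\rob_i=\robRA[\tar_i,\con;\tFin]$, so the downstream task is to reach \emph{both} targets. Requiring the just-reached target again costs nothing, since $\rob_i(\xSig,\tDum)\ge\min\{\tar_i(\xSig(\tDum),\tDum),\con(\xSig(\tDum),\tDum)\}$. Monotonicity is preserved under the min, so Theorem~\ref{thm:main-theorem} applies once. It then remains to check $\min\{\tar,\val[\rob_0]\}=\tarSpec$ pointwise. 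When $\tar_1\ge\tar_2$, this follows from $\val[\rob_2]\ge\val[\rob_0]\ge\min\{\tar_1,\val[\rob_2]\}$ together with the fact that the branch $\min\{\tar_2,\val[\rob_1]\}$ of $\tarSpec$ is dominated.
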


\begin{proof}
    The key idea in this proof is to choose $\tar$ and $\rob_0$ such that $\rob$ can be expressed in the form in \eqref{eqn:main-theorem-composite-performance-funcitonal}.
    See Section \ref{sec:appendix-corollary-udra} in the appendix for details.
\end{proof}

\begin{remark}
    We have noted the above result can be generalized to an arbitrary number of targets.
    In doing so, the new target function is replaced with
    $$\tarSpec := \max_{i = 1,\dots,N} \min\{\tar_i, \val[\rob_{\lnot i}]\},$$
    where $\rob_{\lnot i}$ is the performance functional for completing the unordered multiple reach-avoid task when target $i$ is ignored.
    This result is again similar to the approach for the multiple reach-avoid task discussed in \cite{sharpless2026bellman}, though the proof differs due to the presence of the adversary.
\end{remark}

\begin{corollary}[Reach always-avoid]
Let $\tFin \in \R$ and let $\tar, \con: \Rn \tms \RleT \to \R$ all be continuous.
Consider the reach always-avoid performance functional
$\rob: \xSigs \tms \RleT \to \R$ defined by
\begin{align*}
    &\rob(\xSig, \tVal) = \min\{\robR[\tar; \tFin](\xSig,\tVal), \robA[\con; \tFin](\xSig,\tVal) \}.
\end{align*}
Let $\tarSpec = \min\{\tar, \valA[\con; \tFin]\}$.
Then
\begin{align*}
    \val[\rob] = \valRA[\tarSpec, \con; \tFin].
\end{align*}
\end{corollary}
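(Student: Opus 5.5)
The plan is to apply Theorem \ref{thm:main-theorem} with $\rob_0 = \robA[\con; \tFin]$. By the earlier remarks, this functional is continuous, real-valued and past-independent. The proof then has two parts: rewrite $\rob$ in the form \eqref{eqn:main-theorem-composite-performance-funcitonal}, and verify the monotonicity condition \eqref{eqn:montonicity-condition}. Once both are done, the theorem yields $\val[\rob] = \valRA[\min\{\tar,\valA[\con;\tFin]\}, \con; \tFin]$, which is exactly the claim because $\val[\rob_0] = \valA[\con;\tFin]$.

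For the rewriting, fix $\xSig$ and $\tVal \le \tFin$. For every $\tDum \in [\tVal,\tFin]$ the two constraint terms combine as
\begin{align*}
    \min\Big\{\min_{\tDumDum \in [\tVal,\tDum]} \conEval{\xSig}{\tDumDum}, \robA[\con;\tFin](\xSig,\tDum)\Big\} = \min_{\tDumDum \in [\tVal,\tFin]} \conEval{\xSig}{\tDumDum} = \robA[\con;\tFin](\xSig,\tVal),
\end{align*}
since $[\tVal,\tDum]\cup[\tDum,\tFin]=[\tVal,\tFin]$. Substituting this into \eqref{eqn:main-theorem-composite-performance-funcitonal}, the right-hand side becomes
\begin{align*}
    \max_{\tDum\in[\tVal,\tFin]} \min\{\tarEval{\xSig}{\tDum}, \robA[\con;\tFin](\xSig,\tVal)\}.
\end{align*}
The second argument of the min does not depend on $\tDum$, so it can be pulled outside the max. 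What remains is $\min\{\robR[\tar;\tFin](\xSig,\tVal), \robA[\con;\tFin](\xSig,\tVal)\}$, which is the reach always-avoid functional $\rob$.

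The same identity also settles monotonicity. For $\tVal \le \tDum_1 \le \tDum_2 \le \tFin$, both sides of \eqref{eqn:montonicity-condition} equal $\min_{\tDumDum\in[\tVal,\tFin]}\conEval{\xSig}{\tDumDum}$, so the inequality holds with equality.

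No step is a real obstacle; the only care needed is in the interval-splitting identity. Because the union of the two intervals includes the shared endpoint $\tDum$, the identity holds for every $\tDum$ in the range, including the endpoint case $\tDum = \tFin$.
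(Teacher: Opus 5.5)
Your proposal is correct and follows the paper's proof. You take $\rob_0 = \robA[\con;\tFin]$, observe that $\min\{\min_{\tDumDum\in[\tVal,\tDum]}\conEval{\xSig}{\tDumDum}, \rob_0(\xSig,\tDum)\} = \robA[\con;\tFin](\xSig,\tVal)$ for every intermediate $\tDum$ so that \eqref{eqn:montonicity-condition} holds with equality, and invoke Theorem~\ref{thm:main-theorem}. Your explicit check that $\rob$ coincides with the composite functional \eqref{eqn:main-theorem-composite-performance-funcitonal}, by pulling the $\tDum$-independent term out of the max, is a step the paper leaves implicit and is worth keeping.
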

\begin{proof}
Let $\rob_0 = \robA[\con;\tFin]$.
We check that the monotonicity condition \eqref{eqn:montonicity-condition} holds.
Indeed, let $\xSig \in \xSigs$, let $\tVal \le \tFin$, and let $\tDum_1, \tDum_2 \in \RleT$ be such that $\tVal \le \tDum_1 \le \tDum_2$.
Then
\begin{align*}
    &\min\{\min_{\tDumDum \in [\tVal, \tDum_1]} \conEval{\xSig}{\tDumDum}, \rob_0(\xSig, \tDum_1)\}\\
    &= \robA[\con;\tFin](\xSig, \tVal) \\
    &= \min\{\min_{\tDumDum \in [\tVal, \tDum_2]} \conEval{\xSig}{\tDumDum}, \rob_0(\xSig, \tDum_2)\}.
\end{align*}
The result then follows from Theorem \ref{thm:main-theorem}.
\end{proof}

\section{Importance of the monotonicity condition}\label{sec:counter-example}
In contrast to the one-player case, Theorem \ref{thm:main-theorem} generally may not hold in the two-player setting if the monotonicity condition \eqref{eqn:montonicity-condition} is not satisfied.
We here present an intuitive counter-example in which the monotonicity condition does not hold and the value decomposition fails.
The counter-example involves an ordered double reach-avoid game (as in Corollary \ref{cor:ordered-reach-avoid}), in which the obstacle in the downstream reach-avoid task is not a subset of the obstacle in the upstream reach-avoid task.

\begin{example}\label{ex:counter-example}
\newcommand{\conOneEval}[2]{\con_1 \lf( #1(#2), #2 \rg)}

In this counter-example, we show that conclusion of Theorem \ref{thm:main-theorem} need not hold if the monotonicity condition is not satisfied.
To make this point, we consider an ordered double reach-avoid problem.
This example is visualized in Figure \ref{fig:counter-example}.

Consider a scalar system $\dot{\xSig} = \dSig$, where the disturbance bound is $\dVals = [-1,1]$.
Note there is no explicit control action in the dynamics, but we could equivalently write the system as $\dot{\xSig} = \uSig + \dSig$, with $\uVals = \{0\}$ and $\dVals = [-1,1]$.

We define the goal regions $\goala = (-\ii, 1.5) \tms (1,2)$, $\goalb = (-\ii,2.75) \tms (3,4)$, $\goalone = \goala \cup \goalb$, and $\goaltwo = \R \tms (4,5)$, where the first index refers to the position in state-space and the second is time.
We define the obstacles $\obstacleone = \R \tms [-2,-1]$ and $\obstacletwo = (-\ii,1.5] \tms [2,3]$.

In this problem, the task is to be in $\goalone$ at some time $\tDum_1$, while avoiding $\obstacleone$ until time $\tDum_1$ (inclusively), and then to be in $\goaltwo$ at some time $\tDum_2 \ge \tDum_1$, while avoiding $\obstacletwo$ between times $\tDum_1$ and $\tDum_2$ (inclusively).

Specifically, the ordered double reach-avoid (ODRA) performance functional $\rob_{\rara}: \xSigs \tms \RleT \to \R$ is given by
\begin{align*}
    \rob_{\rara}(\xSig, \tVal) =& \max_{\tDum_1 \in [\tVal, \tFin]}\max_{\tDum_2 \in [\tDum_1, \tFin]} \\
    &\min\{ \tar_1(\xSig(\tDum_1), \tDum_1), \tar_2(\xSig(\tDum_2), \tDum_2),\\
    &\min_{\tDumDum_1 \in [\tVal, \tDum_1]} \con_1(\xSig(\tDumDum_1), \tDumDum_1), \min_{\tDumDum_2 \in [\tDum_1, \tDum_2]} \con_2(\xSig(\tDumDum_2), \tDumDum_2)  \}.
\end{align*}
where $\tFin = 5$ and $\tar_1$, $\tar_2$, $\con_1$, $\con_2$ are the signed distance functions to $\goalone$, $\goaltwo$, $\obstacleone^C$, and $\obstacletwo^C$, respectively.

Note that the disturbance can ensure via the signal $\dSig_0(\cdot) := 0$ that the system will hit the obstacle $\obstacletwo$ from anywhere in $\goala$.
Thus, we have $\valRA[\tar_2, \con_2;\tFin] < 0$ on $\goala$.

Now consider the initial state $(\xNot, \tNot) := (0,0)$ and the disturbance signal $\dSig_1(\cdot) := 1$.
At each time $\tVal \le \tFin$, one can check that we either have $\valRA[\tar_2, \con_2;\tFin](\traj{0,0}{\dSig_1}(\tVal), \tVal) < 0$ or $\tar_1(\traj{0,0}{\dSig_1}(\tVal), \tVal) < 0$.
Thus, letting $\tarSpec := \min\{\tar_1, \valRA[\tar_2, \con_2;\tFin]\}$, we have $\tarSpec(\traj{0,0}{\dSig_1}(\tVal), \tVal) < 0$ for all $\tVal \le \tFin$.
It thus follows that 
\begin{equation*}
    \valRA[\tarSpec, \con_1; \tFin](0,0) < 0.
\end{equation*}

However, from this initial state and time, the ordered double reach-avoid task will be completed, regardless of the disturbance signal.
More precisely, we have
$$\val[\rob_{\rara}](0,0) \ge 0.$$
(This above inequality is in fact strict, though we do not need this fact for the contradiction.)

The apparent contradiction with Theorem \ref{thm:main-theorem} is resolved by noticing that Theorem \ref{thm:main-theorem}'s hypotheses do not hold for this problem.
In particular, the monotonicity condition \eqref{eqn:montonicity-condition} in Theorem \ref{thm:main-theorem} is not satisfied.
To see this, note that for the trajectory $\xSig_0 := \traj{0,0}{0}$ (i.e. the trajectory starting from the origin under the disturbance signal $\dSig_0(\cdot) := 0)$, we have
\begin{align*}
    &\min\{\min_{\tDumDum \in [0, 0]} \conOneEval{\xSig_0}{\tDumDum}, \robRA[\tar_2, \con_2; \tFin](\xSig_0, 0)\}\\
    &=\min\{\conOneEval{\xSig_0}{0}, \robRA[\tar_2, \con_2; \tFin](\xSig_0, 0)\}\\
    &< 0,
\end{align*}
because the trajectory $\xSig_0$ intersects $\obstacletwo$ after time $0$, but
\begin{align*}
    &\min\{\min_{\tDumDum \in [0, 4]} \conOneEval{\xSig_0}{\tDumDum}, \robRA[\tar_2, \con_2; \tFin](\xSig_0, 4)\} > 0,
\end{align*}
because the trajectory $\xSig_0$ is within $\goaltwo$ at time $4.5$ without intersecting $\obstacleone$ on the time interval $[0,4]$ or intersecting $\obstacletwo$ after time $4$. \blackbox
\end{example}

\begin{remark}
    The above counter-example can also be seen as an example justifying the need for the hypothesis $\con_1 \le \con_2$ in Corollary \ref{cor:ordered-reach-avoid}.
    Indeed, the task studied in the counter-example is an ordered double reach-avoid task, but here $\con_1 \not\le \con_2$ because $\obstacletwo \not\sbs \obstacleone$.
\end{remark}
\begin{figure}
    \centering
    \includegraphics[width=1.0\linewidth]{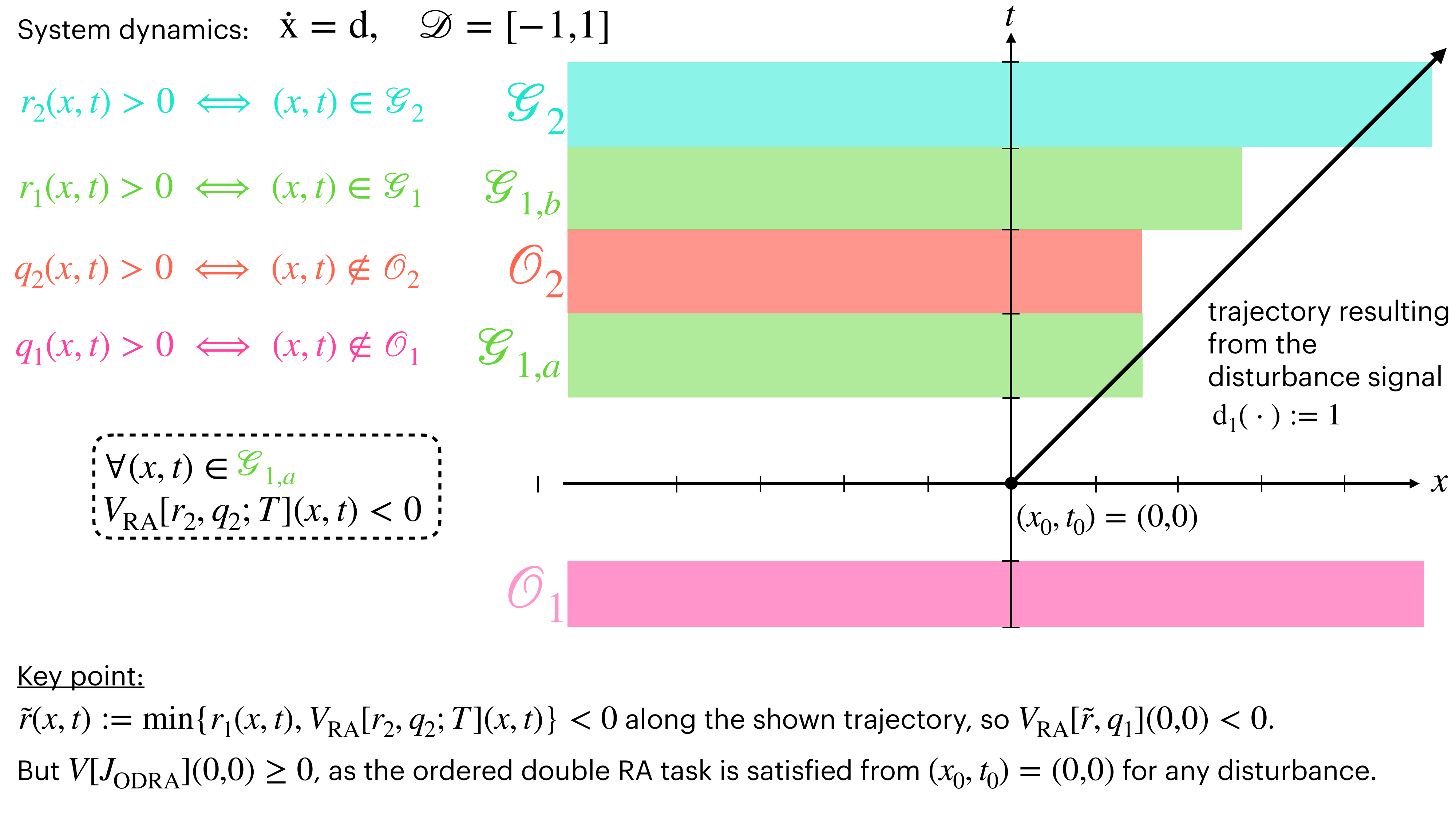}
    \caption{Visualization of Counter-Example \ref{ex:counter-example}.
    We show that without the monotonicity condition \eqref{eqn:montonicity-condition}, the result from Theorem \ref{thm:main-theorem} does not in general hold.
    In Counter-Example 
    \ref{ex:counter-example}, we demonstrate this point via an ordered double reach-avoid problem.
    Specifically, the task is to be in $\goalone := \goala \cup \goalb$ (green) at some time $\tDum_1$, while avoiding $\obstacleone$ (pink) until time $\tDum_1$, and then to be in $\goaltwo$ (cyan) at some time $\tDum_2 \ge \tDum_1$, while avoiding $\obstacletwo$ between times $\tDum_1$ and $\tDum_2$ (orange), with the final time for completion of the task being $\tFin = 5$.
    Because the disturbance can ensure that the system hits $\obstacletwo$ from any point in $\goala$, we have $\valRA[\tar_2,\con_2; \tFin](\xVal, \tVal) < 0$ for every $(\xVal, \tVal) \in \goala$.
    As such, we have $\tarSpec := \min\{\tar_1, \valRA[\tar_2,\con_2; \tFin] \} < 0$ along the trajectory $\traj{0,0}{\dSig_1}$, where $\dSig_1(\cdot) := 1$.
    This implies that $\valRA[\tarSpec, \con_1; \tFin](0,0) < 0$.
    But regardless of the disturbance signal, the ordered double reach-avoid task will be satisfied from the initial state $\xNot = 0$ and time $\tNot = 0$, i.e. $\val[\rob_{\rara}](0,0) \ge 0$.
    }
    \label{fig:counter-example}
\end{figure}

\appendix
\subsection{Proof of Theorem \ref{thm:main-theorem}}\label{sec:appendix-main-theorem}

\begin{proof}
\newcommand{\tJ}{\rob_0}

\newcommand{\firstStrat}{\alpha}
\newcommand{\secondStrat}[1]{\beta[#1]}
\newcommand{\composedStrat}{\dStrat}

\newcommand{\vSig}{\mathrm{v}}
\newcommand{\aSig}{\mathrm{a}}
\newcommand{\bSig}{\mathrm{b}}

\newcommand{\switchTime}[1]{\tDum_{#1}}
\newcommand{\switchState}[1]{\xDum_{#1}}
\newcommand{\switchSet}[1]{\mathcal{S}_{#1}}

\newcommand{\firstTraj}[1]{\traj{x,t}{#1,\firstStrat}}
\newcommand{\secondTraj}[1]{\traj{\switchState{\vSig},\switchTime{\vSig}}{#1,\secondStrat{\switchState{\vSig},\switchTime{\vSig}}}}
\newcommand{\composedTraj}[1]{\traj{x,t}{#1,\composedStrat}}

\newcommand{\ep}{\epsilon}

% trajectories
\newcommand{\xtua}{\traj{\xVal, \tVal}{\uSig, \firstStrat}}
\newcommand{\xtva}{\traj{\xVal, \tVal}{\vSig, \firstStrat}}
\newcommand{\xtud}{\traj{\xVal, \tVal}{\uSig, \composedStrat}}
\newcommand{\xtvd}{\traj{\xVal, \tVal}{\vSig, \composedStrat}}
\newcommand{\ysub}{\traj{\xDum, \tDum}{\uSig, \secondStrat{\xDum, \tDum}}}
\newcommand{\ysvb}{\traj{\switchTime{\vSig}, \switchState{\vSig}}{\vSig, \secondStrat{\switchTime{\vSig}, \switchState{\vSig}}}}

\newcommand{\xtad}{\traj{\xVal, \tVal}{\aSig, \dStrat}}
\newcommand{\ysbda}{\traj{\xDum, \tDum}{\bSig, \dStrat_\aSig}}
\newcommand{\ysuda}{\traj{\xDum, \tDum}{\uSig, \dStrat_\aSig}}

Given an $\xVal \in \Rn$, $\tVal \in \R$, $\uSig \in \uSigs$, and $\dStrat \in \dStrats$, we will define $\traj{\xVal, \tVal}{\uSig, \dStrat} = \stratTraj$.

[Part 1]
First, note that past-independence of $\rob$ follows directly from $\eqref{eqn:main-theorem-composite-performance-funcitonal}$.
We now prove continuity of $\rob$.
Fix some $\tInit < \tFin$.
Let $\tVal \in (\tInit, \tFin]$ and $\xSig_1, \xSig_2 \in \xSigs$.
We have by the standard inequality for the difference of maximums (equivalently minimums) that
\begin{align*}
    |\rob(\xSig_1, \tVal) - \rob(\xSig_2, \tVal)| \le&\\
    \max_{\tDum \in [\tInit, \tFin]} \max\{
    &|\tarEval{\xSig_1}{\tDum} - \tarEval{\xSig_2}{\tDum}|,\\
    &|\conEval{\xSig_1}{\tDum} - \conEval{\xSig_2}{\tDum}|,\\
    &|\rob_0(\xSig_1,\tDum) - \rob_0(\xSig_2,\tDum)|\}.
\end{align*}
Because $\tar$, $\con$, and $\rob_0$ are each continuous, it follows that the right-hand-side is continuous in $\xSig_1$ and $\xSig_2$.
Since this bound holds uniformly for each $\tVal \in (\tInit, \tFin]$, it follows that $\rob(\xSig, \tVal)$ is continuous in $\xSig$, locally uniformly in $\tVal$.
Continuity then follows from the fact that $\rob(\xSig, \cdot)$ is continuous for each $\xSig \in \xSigs$.

[Part 2]
We now prove \eqref{eqn:main-theorem-conclusion}.

($\le$) This direction proceeds by building a near-optimal composite strategy $\composedStrat$ from a primary strategy $\firstStrat$ and a family of secondary strategies $\secondStrat{\xDum,\tDum}$.
    
Fix $\ep > 0$, $\xVal \in \Rn$, and $\tVal \le \tFin$.
Select an adversary strategy $\firstStrat \in \dStrats$ such that
\begin{equation}
    \supu \robRA[\tarSpec,\con; \tFin](\xtua, \tVal) \le \valRA[\tarSpec, \con; \tFin](\xVal, \tVal) + \ep.
\end{equation}
For each $\xDum \in \Rn$ and $\tDum \le \tFin$, choose $\secondStrat{\xDum,\tDum} \in \dStrats$ such that
\begin{equation}
    \supu \tJ\lf(\ysub, \tDum \rg) \le \val[\tJ](\xDum, \tDum) + \ep.
\end{equation}

For each $\uSig \in \uSigs$, we define a set of candidate switch times
\begin{equation}
    \switchSet{\uSig} := \lf\{ \tDum \in [\tVal,\tFin] :  \tarEval{\firstTraj{\uSig}}{\tDum} \ge \valRA[\tarSpec, \con; \tFin](\xVal, \tVal) + 2\ep \rg\},
\end{equation}
we set the actual switch time as 
\begin{equation*}
    \begin{cases}
       \min \switchSet{\uSig} & \switchSet{\uSig} \ne \varnothing\\
       \ii & \switchSet{\uSig} = \varnothing,
    \end{cases}
\end{equation*}
and we set the switch state as $\switchState{\uSig} := \firstTraj{\uSig}(\switchTime{\uSig})$ when $\switchTime{\uSig} < \ii$.

We define the composite strategy $\composedStrat \in \dStrats$ by for each $\uSig \in \uSigs$ letting $\composedStrat(\uSig) \in \dSigs$ be given by
    \begin{equation*}
        \composedStrat(\uSig)(\tVal) = \begin{cases}
            \firstStrat(\uSig)(\tVal)
            & \tVal < \switchTime{\uSig}
            \\
            \secondStrat{\switchState{\uSig}, \switchTime{\uSig}}(\uSig)(\tVal) & \tVal \ge \switchTime{\uSig}
        \end{cases}
    \end{equation*}
    when $\switchTime{\uSig} < \ii$ and setting $\composedStrat(\uSig) = \firstStrat(\uSig)$ otherwise.

We show that $\composedStrat$ is indeed non-anticipative.
    Let $\tDum \in \R$, and suppose $\uSig_1(\tDumDum) = \uSig_2(\tDumDum)$ for a.e. on $\tDumDum \le \tDum$.
    First, assume that $\min\{\switchTime{\uSig_1},\switchTime{\uSig_2}\} \le \tDum$.
    We show $\switchTime{\uSig_1} = \switchTime{\uSig_2}$ and $\switchState{\uSig_1} = \switchState{\uSig_2}$.
    Assume $\switchTime{\uSig_1} \le \switchTime{\uSig_2}$.
    Then $\switchTime{\uSig_1} \le \tDum$, so that 
    \begin{equation}
        \firstTraj{\uSig_1}(\switchTime{\uSig_1}) = \firstTraj{\uSig_2}(\switchTime{\uSig_1})
    \end{equation} by non-anticipativity of $\firstStrat$.
    Thus, because  $\switchTime{\uSig_1} \in \switchSet{\uSig_1}$ (as $\switchTime{\uSig_1} \le \tDum < \ii$), we also have $\switchTime{\uSig_1} \in \switchSet{\uSig_2}$.
    But since $\switchTime{\uSig_2}$ is the minimal element of $\switchSet{\uSig_2}$ and $\switchTime{\uSig_1} \le \switchTime{\uSig_2}$, we indeed have  $\switchTime{\uSig_1} = \switchTime{\uSig_2}$, and also
    $\switchState{\uSig_1} = \switchState{\uSig_2}$.
    The same result follows if instead $\switchTime{\uSig_2} \le \switchTime{\uSig_1}$.

    Thus, either (i) $\switchTime{\uSig_1} > \tDum$ and $\switchTime{\uSig_2} > \tDum$ or (ii) $\switchTime{\uSig_1} = \switchTime{\uSig_2} \le \tDum$ and $\switchState{\uSig_1} = \switchState{\uSig_2}$.
    In case (i), non-anticipativity of $\firstStrat$ implies that $$\composedStrat(\uSig_1)(\tDumDum) = \firstStrat(\uSig_1)(\tDumDum) = \firstStrat(\uSig_2)(\tDumDum) = \composedStrat(\uSig_2)(\tDumDum)$$
    for a.e. $\tDumDum \le \tDum$. 
    In case (ii), non-anticipativity of $\firstStrat$ implies that $$\composedStrat(\uSig_1)(\tDumDum) = \firstStrat(\uSig_1)(\tDumDum) = \firstStrat(\uSig_2)(\tDumDum) = \composedStrat(\uSig_2)(\tDumDum)$$ 
    for a.e.$ \tDumDum \le \switchTime{\uSig_1}$ and non-anticipativity of $\secondStrat{\switchState{\uSig_1}, \switchTime{\uSig_1}} = \secondStrat{\switchState{\uSig_2}, \switchTime{\uSig_2}}$ implies that 
    $$\composedStrat(\uSig_1)(\tDumDum) = \secondStrat{\switchState{\uSig_1}, \switchTime{\uSig_1}}(\uSig_1)(\tDumDum) = \secondStrat{\switchState{\uSig_2}, \switchTime{\uSig_2}}(\uSig_2)(\tDumDum) = \composedStrat(\uSig_2)(\tDumDum)$$
    for a.e. $\tDumDum \in [\switchTime{\uSig_1},\tDum]$.
    It follows that $\composedStrat \in \dStrats$.

    Now, fix a control signal $\vSig \in \uSigs$.
    There are two cases: (i) $\switchTime{\vSig} < \ii$ and (ii) $\switchTime{\vSig} = \ii$.
    
    (i) First suppose that $\switchTime{\vSig} < \ii$.
    Note that for each $\tDumDum \in \R$ we have
    \begin{equation*}
        \composedStrat(\vSig)(\tDumDum) = \begin{cases}
            \firstStrat(\vSig)(\tDumDum) & \tDumDum < \switchTime{\vSig}, \\
            \secondStrat{\switchState{\vSig},\switchTime{\vSig}}(\vSig)(\tDumDum) & \tDumDum \ge \switchTime{\vSig}.
        \end{cases}
    \end{equation*}
    Thus for all $\tDumDum \le \switchTime{\vSig}$,
    \begin{equation}
        \composedTraj{\vSig}(\tDumDum) = \firstTraj{\vSig}(\tDumDum),
    \end{equation}
    and for all $\tDumDum \ge \switchTime{\vSig}$,
    \begin{equation}
        \composedTraj{\vSig}(\tDumDum) = \secondTraj{\vSig}(\tDumDum).
    \end{equation}
    
    We observe that since $\switchTime{\vSig} \in \switchSet{\vSig}$
    \begin{align*}
        &\tarEval{\firstTraj{\vSig}}{\switchTime{\vSig}} \ge \valRA[\tarSpec, \con; \tFin](\xVal, \tVal) + 2\ep\\
        &\ge \min\{\tarEval{\firstTraj{\vSig}}{\switchTime{\vSig}}, \min_{\tDumDum \in [\tVal, \switchTime{\vSig}]} \conEval{\xtva}{\tDumDum},  \\
        &\qquad\qquad\qquad\qquad\qquad\quad \val[\rob_0](\switchState{\vSig}, \switchTime{\vSig}) \} + \ep,
    \end{align*}
    so that
    \begin{align*}
        &\tarEval{\firstTraj{\vSig}}{\switchTime{\vSig}} \\
        &\ge
        \min\{\min_{\tDumDum \in [\tVal, \switchTime{\vSig}]} \conEval{\xtva}{\tDumDum},  \val[\rob_0](\switchState{\vSig}, \switchTime{\vSig}) \} + \ep.
    \end{align*}

    Let $\tDum \in [\tVal, \tFin]$.
    If $\tDum \in [\tVal, \switchTime{\vSig})$, then
    \begin{align*}
        &\min\{\tarEval{\xtvd}{\tDum}, \min_{\tDumDum \in [\tVal, \tDum]} \conEval{\xtvd}{\tDumDum},  \rob_0(\xtvd, \tDum)\} \\
        &\le \tarEval{\xtva}{\tDum}\\
        &< \valRA[\tarSpec, \con; \tFin](\xVal, \tVal) + 2\ep.
    \end{align*}
    If instead we have $\tDum \in [\switchTime{\vSig}, \tFin]$, then
    \begin{align*}
        &\min\{\tarEval{\xtvd}{\tDum}, \min_{\tDumDum \in [\tVal, \tDum]} \conEval{\xtvd}{\tDumDum},  \rob_0(\xtvd, \tDum)\} \\
        &\le \min\{\min_{\tDumDum \in [\tVal, \switchTime{\vSig}]} \conEval{\xtvd}{\tDumDum},  \rob_0(\xtvd, \switchTime{\vSig})\} \\
        &= \min\{\min_{\tDumDum \in [\tVal, \switchTime{\vSig}]} \conEval{\xtva}{\tDumDum},  \rob_0(\secondTraj{\vSig}, \switchTime{\vSig})\} \\
        &\le \min\{\min_{\tDumDum \in [\tVal, \switchTime{\vSig}]} \conEval{\xtva}{\tDumDum},  \val[\rob_0](\switchState{\vSig}, \switchTime{\vSig})\} + \ep,\\
        &\le \min\{\tarEval{\xtva}{\switchTime{\vSig}}, \min_{\tDumDum \in [\tVal, \switchTime{\vSig}]} \conEval{\xtva}{\tDumDum},  \\
        &\qquad\qquad\qquad\qquad\qquad\quad\val[\rob_0](\xtva(\switchTime{\vSig}), \switchTime{\vSig})\} + 2\ep\\
        &\le \valRA[\tarSpec, \con; \tFin](\xVal, \tVal) + 3\ep,
    \end{align*}
    where the first inequality follows from the monotonicity assumption \eqref{eqn:montonicity-condition} and the following equality follows from past-independence of $\rob_0$.
    
    Thus
    \begin{align*}
        \rob(\xtvd, \tVal) &= \max_{\tDum \in [\tVal, \tFin]} \min\{\tarEval{\xtvd}{\tDum}, \\
        &\qquad\qquad\qquad\min_{\tDumDum \in [\tVal, \tDum]} \conEval{\xtvd}{\tDumDum},  \rob_0(\xtvd, \tDum)\} \\
        &\le \valRA[\tarSpec, \con; \tFin](\xVal, \tVal) + 3\ep.
    \end{align*}

    (ii) Now instead suppose we have $\switchTime{\vSig} = \ii$.
    Then
    \begin{align*}
        \rob(\xtvd, \tVal) &= \max_{\tDum \in [\tVal, \tFin]} \min\{\tarEval{\xtvd}{\tDum},\\
        &\qquad\qquad\qquad\min_{\tDumDum \in [\tVal, \tDum]} \conEval{\xtvd}{\tDumDum},  \rob_0(\xtvd, \tDum)\} \\
        &\le \valRA[\tarSpec, \con; \tFin](\xVal, \tVal) + 2\ep.
    \end{align*}

    In either case, we have $\rob(\xtvd, \tVal) \le \valRA[\tarSpec, \con; \tFin](\xVal, \tVal) + 3\ep$.
    Because $\vSig \in \uSigs$ was arbitrary, we can conclude $\val[\rob](\xVal, \tVal) \le \valRA[\tarSpec, \con; \tFin](\xVal, \tVal) + 3\ep$.

    ($\ge$) In this direction, we build a near-optimal control signal $\vSig$ from a primary signal $\aSig$ and a secondary signal $\bSig$ for responding to a near-optimal adversary strategy $\dStrat$.

    Fix $\ep > 0$, $\xVal \in \Rn$, and $\tVal \le \tFin$.
    First, choose $\dStrat \in \dStrats$.
    Next, choose $\aSig \in \uSigs$ and $\tDum \in [\tVal,\tFin]$ such that
    \begin{align*}
        &\min\lf\{ \tarSpecEval{\xtad}{\tDum}, \min_{\tDumDum \in [\tVal,\tDum]} \conEval{\xtad}{\tDumDum} \rg\} \\
        &\ge \valRA[\tarSpec, \con; \tFin](\xVal, \tVal) - \ep.
    \end{align*}

    For each $\uSig \in \uSigs$, let $\uSig_\aSig \in \uSigs$ be given by
    \begin{equation*}
        \uSig_\aSig(\tDumDum) = \begin{cases}
            \aSig(\tDumDum) & \tDumDum < \tDum\\
            \uSig(\tDumDum) & \tDumDum \ge \tDum.
        \end{cases}
    \end{equation*}
    Define the adversary strategy $\dStrat_\aSig \in \dStrats$ by
    \begin{equation*}
        \dStrat_\aSig(\uSig) = \dStrat(\uSig_\aSig),
    \end{equation*}
    where $\dStrat_\aSig$ is non-anticipative because $\dStrat$ is.
    Finally, let $\xDum = \xtad(\tDum)$, and select $\bSig \in \uSigs$ such that
    \begin{equation*}
        \rob_0(\ysbda,\tDum) \ge \supu \rob_0(\ysuda,\tDum) - \ep.
    \end{equation*}

    Define the control signal $\vSig \in \uSigs$ by
    \begin{equation*}
        \vSig(\tDumDum) = \begin{cases}
            \aSig(\tDumDum) & \tDumDum < \tDum, \\
            \bSig(\tDumDum) & \tDumDum \ge \tDum.
        \end{cases}
    \end{equation*}
    
    By non-anticipativity of $\dStrat$, we have $\xtad(\tDumDum) = \xtvd(\tDumDum)$ for all $\tDumDum \le \tDum$.
    In particular, $\xDum = \xtvd(\tDum)$.
    Also note that for all $\tDumDum \ge \tDum$, we have
    $\ysbda(\tDumDum) = \xtvd(\tDumDum)$ by definition of $\dStrat_\aSig$.
    It follows that
    \begin{align*}
        &\supu \rob(\xtud, \tVal)\\
        &\ge \min\lf\{ \tar(\xDum,\tDum), \min_{\tDumDum \in [\tVal,\tDum]} \conEval{\xtvd}{\tDumDum}, \rob_0(\xtvd,\tDum)\rg\}\\
        &= \min\lf\{ \tar(\xDum,\tDum), \min_{\tDumDum \in [\tVal,\tDum]} \conEval{\xtad}{\tDumDum}, \rob_0(\ysbda,\tDum)\rg\}\\
        &\ge \min\lf\{ \tar(\xDum,\tDum), \min_{\tDumDum \in [\tVal,\tDum]} \conEval{\xtad}{\tDumDum}, \val[\rob_0](\xDum, \tDum) \rg\} - \ep\\
        &= \min\lf\{ \tarSpec(\xDum,\tDum), \min_{\tDumDum \in [\tVal,\tDum]} \conEval{\xtad}{\tDumDum}\rg\} - \ep\\
        &\ge \valRA[\tarSpec, \con; \tFin](\xVal, \tVal) - 2\ep,
    \end{align*}
    where the first equality follows from past independence of $\rob_0$.
    Because $\dStrat \in \dStrats$ was arbitrary, we can conclude that
    $\val[\rob](\xVal, \tVal) \ge \valRA[\tarSpec, \con; \tFin](\xVal, \tVal) - 2\ep$.
\end{proof}

\subsection{Proof of Corollary \ref{cor:unordered-reach-avoid}}\label{sec:appendix-corollary-udra}

\begin{proof}

\newcommand{\tarOneEval}[2]{\tar_1 \lf( #1(#2), #2 \rg)}
\newcommand{\tarTwoEval}[2]{\tar_2 \lf( #1(#2), #2 \rg)}

Let $\rob_1 = \robRA[\tar_1, \con; \tFin]$ and $\rob_2 = \robRA[\tar_2, \con; \tFin]$.
Define $\rob_0 = \min\{ \rob_1, \rob_2 \}$ and $\tar = \max\{\tar_1, \tar_2\}$.
Fix $\xSig \in \xSigs$ and $\tVal \le \tFin$.
~\\

\noindent[Step 1] We show that the monotonicity condition \eqref{eqn:montonicity-condition} holds.
Indeed, let $\xSig \in \xSigs$, let $\tVal \le \tFin$, and let $\tDum_1, \tDum_2 \in \RleT$ be such that $\tVal \le \tDum_1 \le \tDum_2$.

Then
\begin{align*}
    &\min\{\min_{\tDumDum \in [\tVal, \tDum_1]} \conEval{\xSig}{\tDumDum}, \rob_i(\xSig, \tDum_1)\}\\
    &= \max_{\tDum \in [\tDum_1, \tFin]}\min\{\min_{\tDumDum \in [\tVal, \tDum]} \conEval{\xSig}{\tDumDum}, \tar_i(\xSig(\tDum),\tDum)\} \\
    &\ge \max_{\tDum \in [\tDum_2, \tFin]}\min\{\min_{\tDumDum \in [\tVal, \tDum]} \conEval{\xSig}{\tDumDum}, \tar_i(\xSig(\tDum),\tDum)\} \\
    &= \min\{\min_{\tDumDum \in [\tVal, \tDum_2]} \conEval{\xSig}{\tDumDum}, \rob_i(\xSig, \tDum_2)\}
\end{align*}
for $i = 1,2$, so that
\begin{align*}
    &\min\{\min_{\tDumDum_1 \in [\tVal, \tDum_1]} \conEval{\xSig}{\tDumDum_1}, \rob_0(\xSig, \tDum_1)\}\\
    &= \min_{i = 1,2} \min\{\min_{\tDumDum \in [\tVal, \tDum_1]} \conEval{\xSig}{\tDumDum}, \rob_i(\xSig, \tDum_1)\}\\
    &\ge \min_{i = 1,2} \min\{\min_{\tDumDum \in [\tVal, \tDum_2]} \conEval{\xSig}{\tDumDum}, \rob_i(\xSig, \tDum_2)\}\\
    &= \min\{\min_{\tDumDum_2 \in [\tVal, \tDum_2]} \conEval{\xSig}{\tDumDum_2}, \rob_0(\xSig, \tDum_2)\}.
\end{align*}

\noindent[Step 2] We claim that 
$$\rob(\xSig, \tVal) = \max_{\tDum \in [\tVal, \tFin]} \min\{\tar(\xSig(\tDum), \tDum), \min_{\tDumDum \in [\tVal, \tDum]} \con(\xSig(\tDumDum), \tDumDum), \rob_0(\xSig, \tDum)\}.$$
($\le$) Let $\tDum_1, \tDum_2 \in [\tVal, \tFin]$.
First, assume that $\tDum_1 \le \tDum_2$.
Then
\begin{align*}
    &\min\{\tarOneEval{\xSig}{\tDum_1}, \tarTwoEval{\xSig}{\tDum_2}, \min_{\tDumDum \in [\tVal, \max\{\tDum_1, \tDum_2\}]} \conEval{\xSig}{\tDumDum}\} \\
    &\le \min\{\tarOneEval{\xSig}{\tDum_1}, \min_{\tDumDum \in [\tVal, \tDum_1]} \conEval{\xSig}{\tDumDum}, \rob_2(\xSig, \tDum_1)\} \\
    &= \min\{\tarOneEval{\xSig}{\tDum_1}, \min_{\tDumDum \in [\tVal, \tDum_1]} \conEval{\xSig}{\tDumDum}, \rob_0(\xSig, \tDum_1)\}\\
    &\le \max_{\tDum \in [\tVal, \tFin]} \min\{\tar(\xSig(\tDum), \tDum), \min_{\tDumDum \in [\tVal, \tDum]} \con(\xSig(\tDumDum), \tDumDum), \rob_0(\xSig, \tDum)\},
\end{align*}
where the above equality comes from the fact that $\min\{\tarOneEval{\xSig}{\tDum_1}, \conEval{\xSig}{\tDum_1}\} \le \rob_1(\xSig, \tDum_1)$.
An analogous argument applies when $\tDum_1 \ge \tDum_2$.
Thus
$$\rob(\xSig, \tVal) \le \max_{\tDum \in [\tVal, \tFin]} \min\{\tar(\xSig(\tDum), \tDum), \min_{\tDumDum \in [\tVal, \tDum]} \con(\xSig(\tDumDum), \tDumDum), \rob_0(\xSig, \tDum)\}.$$

\noindent($\ge$)
Let $\tDum \in [\tVal, \tFin]$.
First, assume $\tarOneEval{\xSig}{\tDum} \ge \tarTwoEval{\xSig}{\tDum}$.
Then
\begin{align*}
    & \min\{\tar(\xSig(\tDum), \tDum), \min_{\tDumDum \in [\tVal, \tDum]} \con(\xSig(\tDumDum), \tDumDum), \rob_0(\xSig, \tDum)\}\\
    & \le \min\{\tarOneEval{\xSig}{\tDum}, \min_{\tDumDum \in [\tVal, \tDum]} \con(\xSig(\tDumDum), \tDumDum), \rob_2(\xSig, \tDum)\}\\
    &= \max_{\tDum_2 \in [\tDum, \tFin]} \min\{\tarOneEval{\xSig}{\tDum}, \min_{\tDumDum \in [\tVal, \tDum_2]} \con(\xSig(\tDumDum), \tDumDum), \tarTwoEval{\xSig}{\tDum_2}\}\\
    &\le \rob(\xSig, \tVal).
\end{align*}
An analogous argument holds when $\tarOneEval{\xSig}{\tDum} \le \tarTwoEval{\xSig}{\tDum}$.
Thus 
$$\rob(\xSig, \tVal) \ge \max_{\tDum \in [\tVal, \tFin]} \min\{\tar(\xSig(\tDum), \tDum), \min_{\tDumDum \in [\tVal, \tDum]} \con(\xSig(\tDumDum), \tDumDum), \rob_0(\xSig, \tDum)\}.$$

[Step 3]
In light of Step 2, we can apply Theorem \ref{thm:main-theorem} to obtain
$$\val[\rob] = \valRA[\min\{\tar, \val[\rob_0]\},\con;\tFin].$$
It thus suffices to prove that $\tarSpec = \min\{\tar, \val[\rob_0]\}$.
Let $\xVal \in \Rn$ and $\tVal \le \tFin$.

First, suppose $\tar_1(\xVal, \tVal) \ge \tar_2(\xVal, \tVal)$.
Then 
\begin{equation*}
\min\{\tar(\xVal, \tVal), \val[\rob_0](\xVal, \tVal)\} = \min\{\tar_1(\xVal,\tVal), \val[\rob_0](\xVal, \tVal)\}.
\end{equation*}
In addition,
\begin{equation*}
\tarSpec(\xVal, \tVal) = \min\{\tar_1(\xVal, \tVal), \val[\rob_2](\xVal, \tVal) \}
\end{equation*}
because $\tar_1(\xVal, \tVal) \ge \tar_2(\xVal, \tVal) \ge \min\{\tar_2(\xVal, \tVal), \val[\rob_1](\xVal, \tVal)\}$ and $\val[\rob_2](\xVal, \tVal) \ge \min\{\tar_1(\xVal, \tVal), \tar_2(\xVal, \tVal), \con(\xVal, \tVal)\} \ge \min\{\tar_2(\xVal, \tVal), \val[\rob_1](\xVal, \tVal)\}$.

On the other hand, we have
\begin{align*}
    \val[\rob_0](\xVal, \tVal) &= \infd\supu \min\{\rob_1(\stratTraj,\tVal), \rob_2(\stratTraj,\tVal)\}\\
    &\ge \infd\supu \min\{\tar_1(\xVal,\tVal), \con(\xVal, \tVal),  \rob_2(\stratTraj,\tVal)\}\\
    &= \infd\supu \min\{\tar_1(\xVal,\tVal),   \rob_2(\stratTraj,\tVal)\}\\
    &= \min\{\tar_1(\xVal, \tVal), \val[\rob_2](\xVal, \tVal)\}\\
    &\ge \min\{\tar_1(\xVal, \tVal), \val[\rob_0](\xVal, \tVal)\},\\
\end{align*}
so that
\begin{equation*}
    \min\{\tar_1(\xVal, \tVal), \val[\rob_0](\xVal, \tVal)\} = \min\{\tar_1(\xVal, \tVal), \val[\rob_2](\xVal, \tVal)\}.
\end{equation*}
Thus, we in fact have
$$\tarSpec(\xVal, \tVal) = \min\{\tar(\xVal, \tVal), \val[\rob_0](\xVal, \tVal) \}.$$
A similar argument gives the same result if $\tar_2(\xVal, \tVal) \ge \tar_1(\xVal, \tVal)$.
\end{proof}

\bibliographystyle{ieeetr}
\bibliography{references}

\end{document}